\documentclass[pdflatex,sn-mathphys-num]{sn-jnl}

\usepackage{graphicx}%
\usepackage{multirow}%
\usepackage{amsmath,amssymb,amsfonts}%
\usepackage{amsthm}%
\usepackage{mathrsfs}%
\usepackage[title]{appendix}%
\usepackage{xcolor}%
\usepackage{textcomp}%
\usepackage{manyfoot}%
\usepackage{booktabs}%
\usepackage{algorithm}%
\usepackage{algorithmicx}%
\usepackage{algpseudocode}%
\usepackage{listings}%
\usepackage{pifont} 
\usepackage{float}
\usepackage{xurl}

\theoremstyle{thmstyleone}%
\newtheorem{theorem}{Theorem}
\newtheorem{corollary}{Corollary}
\theoremstyle{thmstyletwo}%

\theoremstyle{thmstylethree}%

\begin{document}

\title[Article Title]{String stable platoons of all-electric aircraft with operating costs and airspace complexity trade-off}


\author*[1]{\fnm{Lucas} \sur{Souza e Silva}}\email{lucas.souzaesilva@mail.concordia.ca}

\author[1]{\fnm{Luis} \sur{Rodrigues}}

\affil*[1]{\orgdiv{Electrical and Computer Engineering}, \orgname{Concordia University}, \city{Montreal}, \state{Quebec}, \country{Canada}}


\abstract{This paper formulates an optimal control framework for computing cruise airspeeds in predecessor–follower platoons of all-electric aircraft that balance operational cost and airspace complexity. To quantify controller workload and coordination effort, a novel pairwise dynamic workload (PDW) function is developed. 
Within this framework, the optimal airspeed solution is derived for all-electric aircraft under longitudinal wind disturbances. Moreover, an analytical suboptimal solution for heterogeneous platoons with nonlinear aircraft dynamics is determined, for which a general sufficient condition for string stability is formally established. The methodology is validated through case studies of all-electric aircraft operating in air corridors that are suitable for low-altitude advanced/urban air mobility (AAM/UAM) applications. Results show that the suboptimal solution closely approximates the optimal, while ensuring safe separations, maintaining string stability, and reducing operational cost and airspace complexity. These findings support the development of sustainable and more autonomous air traffic procedures that will enable the implementation of emerging air transportation technologies, such as AAM/UAM, and their integration to the air traffic system environment.}

\keywords{advanced air mobility, air corridor, aircraft platoons, string stability, airspace complexity}



\maketitle

\section{Introduction}\label{sec1}
The airspace is becoming more complex due to the growing demand for both domestic and international travel. As a result, the augmented airspace complexity imposes greater strain on the efficiency of air traffic management (ATM). This strain is  intense on the air traffic control operators (ATCOs), who must perform high-pressure, time-sensitive tasks that call for sustained attention over extended work shifts \cite{NYT}. Moreover, new transportation technologies, such as advanced/urban air mobility (AAM/UAM), require integration with the traditional air traffic control (ATC) environment. Thus, achieving a harmonized and coordinated air traffic management becomes increasingly challenging. One approach to alleviate these challenges is the use of structured and collaborative aircraft procedures. Several of these procedures can be characterized as platoons of pairwise predecessor–follower operations, in which safe separation between aircraft is maintained throughout the flight. All-electric aircraft operating within AAM/UAM air corridors serve as an example of these procedures, where coordinated cooperation will ensure safe and efficient traffic management, consistent with the guidelines outlined in \cite{FAA_Conops}. Furthermore, these procedures may support low-altitude aircraft operations in uncontrolled airspace sectors, where air traffic services are limited. 

As the complexity of the airspace continues to grow, these operations necessitate automated or supervised ATC strategies to ensure their effective management. However, disturbances may be amplified along automated predecessor-follower aircraft platoons, resulting in string instability. As a consequence, small speed adjustments by the predecessor may force large, oscillatory corrections by the followers, risking cascading separation violations and requiring more frequent ATC interventions. In addition to complying with all applicable airworthiness regulations and ensuring effective airspace coordination, aircraft operators must also remain profitable and meet customer expectations. This makes the consideration of operational costs essential to guarantee a sustainable airspace for all stakeholders. This paper proposes a novel optimal control framework to compute the airspeed of all-electric aircraft in longitudinal platoons of predecessor-follower procedures. To the best of the authors’ knowledge, this is the first framework to compute optimal cruise airspeeds for such procedures, aiming at balancing operational cost and airspace complexity. More specifically, the contributions of this paper are:

\begin{enumerate}
    \item The proposition of an optimal control framework for a class of pairwise predecessor-follower airspace procedures, applicable to all-electric aircraft operations, considering longitudinal wind.
    \item The proposal of a pairwise dynamic workload (PDW) function as an airspace complexity dynamic metric that models relevant airspace parameters of pairwise predecessor-follower aircraft operations.
    \item The derivation of a sufficient condition for string stability of pairwise all-electric aircraft operations, applicable to platoons modeled with nonlinear aircraft dynamics and heterogeneous parameter values. 
\end{enumerate}

The remainder of this paper is organized as follows. Section \ref{sec_RelatedW} reviews the relevant literature. Section \ref{sec_Prob_Stat_Sol} introduces the pairwise dynamic workload (PDW) metric for airspace complexity and presents the problem statement and formulation. Section \ref{sec_Sol} describes the proposed methodology and derives the optimal and suboptimal airspeed solutions for all-electric aircraft, along with a sufficient condition for string stability in pairwise operations. Section \ref{sec_Results} validates the assumption underlying the suboptimal solution using a shooting method and presents simulation results for realistic air-corridor scenarios, including the evaluation of longitudinal wind effects and string stability. Finally, Section \ref{sec_Conc} concludes the paper.

\section{Background and Related Work}\label{sec_RelatedW}
This paper considers aircraft operations in cruise of constant altitude $h_c$ characterized as longitudinal and heterogeneous platoons with unidirectional predecessor-follower topology subject to hybrid control, as shown in Fig. \ref{fig_platoon_gen}.

\textit{Definition 1 (Longitudinal Heterogeneous and Unidirectional Predecessor-Follower Platoon with Hybrid Control):} A platoon is said to be longitudinal, heterogeneous, and unidirectional with hybrid control if it satisfies the following properties: (i) it follows a \textit{unidirectional predecessor-follower topology}, where each follower agent receives information solely from its immediate predecessor, without feedback in the reverse direction; (ii) it is \textit{longitudinal}, meaning that only longitudinal motion is considered, with lateral and vertical dynamics neglected; (iii) it is \textit{heterogeneous}, comprising agents that are modeled with different physical and dynamic parameters; and (iv) it employs \textit{hybrid control}, wherein each agent applies a locally defined control law based on available information, while being coordinated or supervised by a centralized air traffic 

\begin{figure}[ht]
\centerline{\includegraphics[scale=0.5]{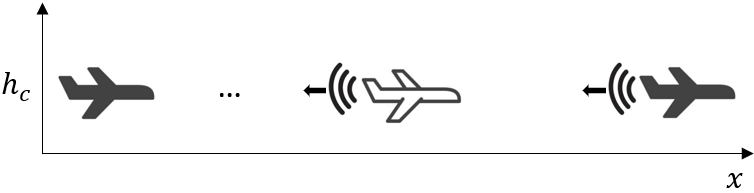}}
\caption{Longitudinal heterogeneous and unidirectional predecessor-follower aircraft platoon}
\label{fig_platoon_gen}
\end{figure}

An essential aspect of sustainable predecessor–follower operations within complex airspace sectors is the reduction of greenhouse gas (GHG) emissions. The economy mode \cite{SouzaeSilva2025} offers a complementary solution due to the limited energy density of modern batteries for all-electric (AEA) and hybrid aircraft \cite{Barzkar_Ghassemi2022}. In the economy mode, optimal parameters (e.g., airspeed) are computed to minimize operational costs, based on the ratio of time-related to energy-related costs, known as the cost index ($CI$).

\subsection {Predecessor-follower operations and formation flights} \label{subsec_PF_formations}

Predecessor–follower operations are a class of airspace procedures that maintain safe separation between aircraft. As airspace demand increases, such procedures offer automation opportunities to reduce complexity by improving coordination and limiting reliance on manual control. Examples include (i) AAM/UAM vehicles operating in air corridors, (ii) autonomous or supervised ATM operations, such as the Automatic Dependent Surveillance - Broadcast (ADS-B) In Flight-deck Interval Management (FIM) \cite{FAA_FIM}, (iii) cooperative ATM requiring real-time speed adjustments, (iv) separation in uncontrolled or low-surveillance airspace, (v) aircraft platooning, and (vi) terminal-area sequencing. 

Prior research on aircraft formations and predecessor-follower operations has largely focused on formation keeping \cite{Giulietti2000}, safe separation \cite{Brittain2019,Su2024}, and air traffic management applications, such as conflict resolution \cite{Tomlin1998}, airport operations \cite{Gorodetsky2008}, ATCO collaboration \cite{Wolfe2009}, arrival spacing \cite{Toratani2022}, and route planning \cite{Yang2018}. More recent work extends these methods to unmanned aerial vehicles (UAV) \cite{Chen2015,Su2023} and UAM platoons \cite{Mayle2023}; however, energy management for cost and GHG emissions reduction remains unaddressed. Early studies explored proportional-integral (PI) control for energy-efficient formations \cite{Buzogany1993} and constant-speed flight plans \cite{Ribichini2003,Benvenuti2024}, while nonlinear formation dynamics were examined without wind effects \cite{Hartjes2018,Curtis2025}. Subsequent work incorporated wind fields \cite{Hartjes2019,Cerezo2021,Tang2022, Cerezo2022}, and graph theory approaches to model UAM traffic with energy-efficient spacing \cite{Khoshdel2024}. As opposed to the previous literature, this paper develops optimal and suboptimal airspeed computation methods for pairwise predecessor-follower operations that explicitly balance operational costs and airspace complexity, applicable to all-electric aircraft under constant longitudinal winds.

\subsection{String stability in aircraft operations} \label{subsec_SS_aircraft}

String stability was formalized in \cite{Besselink2017} as the condition for which disturbances originating from preceding agents do not amplify as they propagate through the sequence of vehicles in a platoon.

\textit{Definition 2 (String stability):} An aircraft platoon with a predecessor-follower topology is string stable if any bounded perturbation in the predecessor’s airspeed $\delta v_{i-1}(t)$ yields a follower perturbation $\delta v_i(t)$ such that $|\delta v_i(t)| \leq |\delta v_{i-1}(t)|$. 

Platooning has been extensively investigated in the context of ground vehicles. A notable example is the California Partners for Advanced Transit and Highways (PATH) program, which represented the first large-scale, fully integrated, and systematically validated demonstration of automated vehicle platoons operating on highway-like environments \cite{Path}. A more recent example has extended the leader–follower framework to a specialized application of timber hauling on forest roads under challenging terrain conditions in Quebec, Canada \cite{Kratos}. In contrast to ground vehicle platoons, aircraft platooning introduces distinct operational constraints and aerodynamic considerations. Specifically, the aircraft airspeed is lower-bounded by the stall speed, preventing them from decelerating to a full stop to ensure safe separation, as is possible with ground vehicles. Moreover, the aerodynamic drag experienced by aircraft includes an induced component that arises from the lift generation and that is typically neglected in ground vehicle platoon formulations. References \cite{Turri2017} and \cite{Hussein2021} provide examples of ground vehicle platoons where the aerodynamic drag model accounts only for the parasitic (zero-lift) factor.

Sufficient conditions for string stability have been proposed across various analytical domains (e.g., time and frequency), for both linear and nonlinear dynamic systems, with the majority of studies focused on ground vehicle applications \cite{Feng2019}. However, only a limited number of articles have extended the concept of string stability to aircraft formations \cite{Allen2002,Swieringa2015,Weitz2018,Riehl2022,Stephens2023}. Early studies numerically analyzed linear formations \cite{Allen2002}, while sufficient conditions were formally derived for interval management \cite{Swieringa2015,Weitz2018}, though neglecting nonlinear dynamics and energy efficiency. More recent work addressed these gaps: \cite{Riehl2022} introduced energy-efficient control ensuring string stability, while \cite{Stephens2023} examined nonlinear heterogeneous formations under winds and turbulence. This paper establishes, for the first time, a sufficient condition for string stability in predecessor-follower operations of all-electric aircraft governed by a nonlinear function. Making this function equal to zero leads to the airspeed values that balance operational costs and airspace complexity for all-electric aircraft under longitudinal winds.

\subsection {Airspace complexity metrics} \label{subsec_Complexity_rev}
Several metrics have been proposed in the literature to quantify airspace complexity, depending on the type of operations of interest. For pairwise aircraft operating in cruise, it is essential to consider their horizontal separation and the rate at which this separation changes over time. References \cite{Pawlak1996, Chatterji2001, Laudeman1998} propose different complexity metrics and analyses that account for horizontal separation and separation rate. However, \cite{Pawlak1996} and \cite{Chatterji2001} do not provide a closed-form relation among these parameters, while \cite{Laudeman1998} only considers the number of aircraft in a sector that changed speed or are within unsafe distances from others, without capturing their continuous evolution over time. To address these limitations, we propose in section \ref{subsec_Complexity_Metric}
a novel pairwise dynamic workload (PDW) metric function for pairwise predecessor-follower aircraft operations.

\section{Problem statement and formulation}\label{sec_Prob_Stat_Sol}

\subsection {Assumptions and aircraft dynamic model} \label{subsec_Dynam_Assump}

Consider the number of aircraft in a platoon is $N+1$, with a follower aircraft labeled $A_i$ for $i=1,2,...,N$ and its predecessor aircraft labeled $A_{i-1}$. Safe predecessor-follower aircraft operations guarantee that the separation between $A_i$ and $A_{i-1}$ meets the applicable airspace management requirements, aiming at reduced ATC interference. Let $t$ be time, $x_i(t)$ the horizontal position of the follower aircraft in cruise, $v_i(t)$ its airspeed, $D_i(t)$, $L_i(t)$ and $T_i(t)$ the magnitude of the drag, lift and thrust forces applied to aircraft $A_i$, $W_i(t)$ its weight and let $v_{i-1}(t)$ be the airspeed of $A_{i-1}$. The following assumptions are made:

\begin{enumerate}
    \item Each follower aircraft cruises in steady flight, at constant altitude, which implies that the air density $\rho$ is also constant. As a consequence of this assumption we have $L_i(t)=W_i(t)$ and $T_i(t)=D_i(t)$.
    \item It is assumed that the Mach number for both aircraft remains below the drag divergence threshold, allowing to neglect the wave drag effects.
    \item Longitudinal winds of constant speed $v_w$ are considered, so that $v_w > 0$ represents a tailwind and $v_w < 0$ represents headwind.
    \item All aircraft in the platoon have fixed wings of surface area $S_i$, they operate within their flight envelope and have the same stall and maximum airspeeds.
    \item The electrical charge $Q_i(t)$ is provided by an ideal battery with negligible internal resistance, efficiency $\eta_i$ and voltage $U_i$ that is assumed constant during cruise.
    \item The follower aircraft is equipped with secondary surveillance systems (e.g., ADS-B In receivers) or sensors that allow it to access the predecessor's airspeed $v_{i-1}(t)$.
\end{enumerate}

Based on assumptions 1 and 2, the magnitude of the drag force $D_i(t)$ is given by \cite{Anderson1999}
\begin{equation}
    D_i(t) = \frac{1}{2} \rho S_i C_{{D,0}_i} v_i^2(t) + \frac{2C_{{D,2}_i}W_i^2}{\rho S_i v_i^2(t)} \label{eq_drag}
\end{equation}
where $C_{{D,0}_i}$ and $C_{{D,2}_i}$ are the parasitic zero-lift drag coefficient and the lift-induced drag coefficient, respectively. The weight of all-electric aircraft is constant throughout the flight, thus $W_i(t) = W_i$. The dynamic model of a follower aircraft cruising in predecessor-follower operations, accounting for the separation $d_{i,i-1}(t)$ from its predecessor, can be expressed as 
\begin{equation}
    \dot{x}_i(t) = v_i(t) + v_w \label{eq_dot_x}
\end{equation}
\begin{equation}
    \dot{d}_{i,i-1}(t) = v_{i-1}(t) - v_i(t)\label{eq_dot_d}
\end{equation}
\begin{equation}
  \dot{\Gamma}_i(t) = f_i(T_i(t),\Gamma_i(t)) = f_i(D_i(t),\Gamma_i(t))\label{eq_dot_Gamma}     
\end{equation}

In (\ref{eq_dot_Gamma}), $\dot{\Gamma}_i(t)$ represents the energy consumption rate of the follower aircraft. For all-electric aircraft, energy is provided by onboard batteries and $\dot{\Gamma}_i(t)$ is defined in Coulombs per second ($\mathrm{C/s}$) or Ampères ($\mathrm{A}$), corresponding to the electric current drawn from the batteries. As $T_i(t)=D_i(t)$ per assumption 1, then $f_i(T_i(t),\Gamma_i(t)) = f_i(D_i(t),\Gamma_i(t))$. 

\subsection{Pairwise dynamic workload (PDW)} \label{subsec_Complexity_Metric}
To dynamically describe the variation of the airspace complexity associated to pairwise predecessor-follower aircraft operations in cruise, it is assumed that both aircraft are flying at the same altitude. We propose a pairwise dynamic workload (PDW) function $\chi_{i,i-1}$ that satisfies the following conditions:

\begin{itemize}
    \item The function $\chi_{i,i-1}$ is continuous and strictly positive for all $t \in [0,t_f]$, where $t_f$ denotes the duration of the pairwise operation.
    
    \item The value of $\chi_{i,i-1}$ increases as the inter-aircraft separation $d_{i,i-1}$ decreases, and decreases when $d_{i,i-1}$ increases.
    
    \item An increase in the separation rate $\dot{d}_{i,i-1}$ should increase the derivative of $\chi_{i,i-1}$ with respect to $d_{i,i-1}$. Conversely, a decrease in $\dot{d}_{i,i-1}$ should reduce this derivative.
\end{itemize}

\textit{Definition 3 (Pairwise dynamic workload):} Given the separation $d_{i,i-1}$ and separation rate $\dot{d}_{i,i-1}$ of two aircraft in predecessor-follower operations of duration $t_f$ and given the maximum separation rate $\dot{d}_{max} > 0$, with $\dot{d}_{max} > |\dot{d}_{i,i-1}|$ $\forall$ $t \in [0,t_f]$, the proposed pairwise dynamic workload (PDW) function $\chi_{i,i-1}(d_{i,i-1},\dot{d}_{i,i-1})$ is
\begin{equation}
    \chi_{i,i-1}(d_{i,i-1},\dot{d}_{i,i-1}) = \frac{1}{d_{i,i-1}} \biggl(1-\frac{\dot{d}_{i,i-1}}{\dot{d}_{max}} \biggl) \label{eq_PDW}
\end{equation}

Note that $\chi_{i,i-1} > 0$ for any finite  $d_{i,i-1}$, and it is inversely proportional to the inter-aircraft separation, as smaller separations require increased attention and workload from the ATCOs. This behavior is confirmed by the derivative in \eqref{eq_partialX_d}, where the numerator is always positive, which makes the derivative of $\chi_{i,i-1}$ with respect to $d_{i,i-1}$ always negative.
\begin{equation}
    \frac{\partial \chi_{i,i-1}}{\partial d_{i,i-1}} = -\frac{1-\frac{\dot{d}_{i,i-1}}{\dot{d}_{max}}}{d_{i,i-1}^2} \label{eq_partialX_d}
\end{equation}

Moreover, a negative closure rate $\dot{d}_{i,i-1}$ indicates that the aircraft are converging, thereby increasing the coordination workload for the ATCOs and contributing to greater airspace complexity. In contrast, a positive closure rate implies that the aircraft are diverging, which reduces the controller's workload and consequently lowers the associated complexity. This interpretation can be confirmed by inspecting the computed derivative in \eqref{eq_partialX_ddot}. The derivative of $\chi_{i,i-1}$ with respect to $\dot{d}_{i,i-1}$ is always negative, given that the denominator in \eqref{eq_partialX_ddot} is always positive.
\begin{equation}
     \frac{\partial \chi_{i,i-1}}{\partial \dot{d}_{i,i-1}} = -\frac{1}{d_{i,i-1}\dot{d}_{max}} \label{eq_partialX_ddot}
\end{equation}

\subsection{Problem formulation} \label{subsec_Form}
In this paper, we propose a cost function that combines the operational cost of an aircraft engaged in predecessor-follower operations with the associated airspace complexity modeled as the PDW metric function introduced in section \ref{subsec_Complexity_Metric}. 

\textit{Definition 4 (Cost function):} Given the cost index $CI_i\ge0$, the energy consumption rate function $\dot{\Gamma}_i(t)$ of the follower aircraft and the pairwise dynamic workload $\chi_{i,i-1}$, the cost function $J_i$ is
\begin{equation}
    J_i =\int_{0}^{t_{f}} [CI_i - \dot{\Gamma}_i(t) + \alpha \chi_{i,i-1}]\,dt\label{eq_totalJ}
\end{equation}
where $\alpha\ge0$ is a scaling factor. The minimization of the cost in (\ref{eq_totalJ}) applied to a predecessor-follower operation can be formulated as an optimal control problem (OCP) as follows, 
\begin{equation}
\begin{aligned}
J_i^{*} = \min_{v_i,t_{f}} \quad & \int_{0}^{t_{f}} [CI_i - \dot{\Gamma}_i(t) + \alpha \chi_{i,i-1}] \,dt \label{eq_opt_prob}\\
\textrm{s.t.} \quad & \dot{x}_i(t)=v_i(t)+v_w\\
  &\dot{d}_{i,i-1}(t) = v_{i-1}(t) - v_i(t)\\
  &\dot{\Gamma}_i(t)=f_i(D_i(t),\Gamma_i(t))\\ 
  & \chi_{i,i-1} = \frac{1}{d_{i,i-1}(t)} \biggl(1-\frac{\dot{d}_{i,i-1}(t)}{\dot{d}_{max}} \biggl)\\
  &D_i(t) = \frac{1}{2} \rho S_i C_{{D,0}_i} v_i^2(t) + \frac{2C_{{D,2}_i}W_i^2}{\rho S_i v_i^2(t)}\\
  &\Gamma_i(0)=\Gamma_{{0}_i}, d_{i,i-1}(0)=d_{{0}_{i,i-1}}\\
  &x_i(0)=x_{{0}_i}, x_i(t_{f})=x_{{f}_i}\\
  &d_{i,i-1}(t) \geq d_{min} \\
  & v_{stall} < v_i(t) < v_{max}\\
\end{aligned}
\end{equation}
where $J_i^{*}$ is the minimum cost achieved for the minimizers of (\ref{eq_opt_prob}), which are the optimal follower aircraft airspeed $v_i^{*}(t)$ and optimal cruise time $t_{f}^{*}$. Note that the cost index $CI_i$ is a penalty factor in the final time $t_f^*$, so increasing $CI_i$ results in increasing $v_i^*(t)$. The minimum aircraft separation $d_{min}$, established by applicable air traffic regulations, ensures safe separation and avoids collisions. It is assumed that $d_{{0}_{i,i-1}}>d_{min}$. The value of $v_i(t)$ is lower bounded by the stall airspeed $v_{stall}$ of the follower aircraft and upper bounded by its maximum airspeed $v_{max}$.

\section{Problem solution} \label{sec_Sol}
This section derives the optimal and suboptimal airspeed solutions for all-electric aircraft platoons and establishes a sufficient condition for their string stability.

\subsection{OCP solution} \label{subsec_opt_Sol}

\begin{theorem} [Optimal cruise airspeed]
    Let $I_t$ be an interval of positive length such that $I_t=(t_1,t_2)\subset[0,t^*_f]$, with $t_2>t_1>0$ and $t_s \in [0,t^*_f] $ denotes a time instant. We define the time-dependent multiplier $\mu(t) = \mu_{b}(t) + \mu_s \delta(t-t_s)$, with $\mu_{b}(t) \geq 0$, $\mu_s \geq 0$ and $\delta(t-t_s)$ is the Dirac delta function. The optimal airspeed $v_i^*(t)$ of a follower all-electric aircraft in a longitudinal platoon satisfies the following case-dependent conditions:

    \begin{enumerate} [wide]
 \item \textbf{Interior region ($d_{i,i-1}(t) > d_{min}$):} The optimal airspeed $v_i^*(t)$ is the solution of $p_i=0$, with the function $p_i$ defined in \eqref{eq_OCP_sol}, if $v_{stall} < v_i^*(t) < v_{max}$.
 \small
    \begin{align}
      & p_i = CI_i + J_{d_{i,i-1}}^*(t) (v_{i-1}(t)+v_w) + \frac{\alpha}{d_{i,i-1}(t)} \Biggl[1-\frac{v_{i-1}(t)+v_w}{\dot{d}_{max}} \Biggl] \nonumber \\
      &+ \frac{2C_{{D,2}_i}W_i^2}{\eta_i U_i\rho S_i {v_i^*}^2(t)}(2v_i^*(t) + v_w)
      -\frac{\rho S_i C_{{D,0}_i} {v_i^*}^2(t)}{\eta_i U_i} \biggl(v_i^*(t) + \frac{3}{2}v_w\biggl) 
      & \label{eq_OCP_sol}
      \end{align}
      \normalsize
with costate equation:
\begin{equation}
    \dot{J}_{d_{i,i-1}}^*(t) = \frac{\alpha}{d^2_{i,i-1}(t)} \biggl(1 + \frac{v_i^*(t)-v_{i-1}(t)}{\dot{d}_{max}}\biggl)\label{eq_costate_Jd}
\end{equation}
and final condition:
\begin{equation}
    J_{d_{i,i-1}}^*(t_f^*)=0 
\end{equation}

    \item \textbf{Instantaneous constraint boundary contact at $t = t_s$ ($d_{i,i-1}(t_s)=d_{min}$ and $\dot{d}_{i,i-1}(t_s)>0$):} the optimal airspeed $v_i^*(t_s)$ is the solution of $q_i=0$, with the function $q_i$ given by \eqref{eq_q_i} and $t_s^- = \lim_{\epsilon \rightarrow0^+}(t_s-\epsilon)$, if  $v_{stall} < v_i^*(t_s) < v_{max}$.
    \begin{align}
      & q_i =  [v_i^*(t_s^-)-v_i^*(t_s)]\Biggl[ -\frac{3\rho S_i C_{{D,0}_i}}{2\eta_i U_i}{v_i^*}^2(t_s^-) + \frac{2C_{{D,2}_i}W_i^2}{\eta_i U_i\rho S_i v_i^*(t_s^-)}\Biggl] \nonumber \\
      & -\frac{\rho S_i C_{{D,0}_i}}{\eta_i U_i} [{v_i^*}^3(t_s)-{v_i^*}^3(t_s^-)] - \frac{2C_{{D,2}_i}W_i^2}{\eta_i U_i\rho S_i} \Biggl[\frac{1}{v_i^*(t_s)} - \frac{1}{v_i^*(t_s^-)} \Biggl] \nonumber \\  
       & +\frac{\alpha}{d_{i,i-1}(t_s^-)}[v_{i-1}(t_s) - v_{i-1}(t_s^-)] \nonumber \\
    &- J_{d_{i,i-1}}^*(t_s)[v_{i-1}(t_s)-v_i^*(t_s)] +J_{d_{i,i-1}}^*(t_s^-)[v_{i-1}(t_s^-)-v_i^*(t_s)]\label{eq_q_i}
      \end{align}
with costate equation
\begin{equation}
    \dot{J}_{d_{i,i-1}}^*(t_s) = \frac{\alpha}{d^2_{i,i-1}(t_s)} \biggl(1 + \frac{v_i^*(t_s)-v_{i-1}(t_s)}{\dot{d}_{max}}\biggl) + \mu_s\label{eq_costate_Jd_contact}
\end{equation}
and additional conditions
\begin{equation}
\mu_s>0, \ \mu_s(d_{min} - d_{i,i-1}(t_s)) =0
\end{equation}
\begin{equation}
 J_{d_{i,i-1}}^*(t_s)=J_{d_{i,i-1}}^*(t_s^-) +\mu_s
\end{equation}

    \item \textbf{Constraint boundary arc, for which $t \in I_t$ ($d_{i,i-1}(t)=d_{min}$ and $\dot{d}_{i,i-1}(t)=0$):} the optimal airspeed $v_i^*(t)$ is
    \begin{equation}
        v_i^*(t) = v_{i-1}(t)
    \end{equation}
with costate equation
\begin{equation}
    \dot{J}_{d_{i,i-1}}^*(t) = \frac{\alpha}{d^2_{i,i-1}(t)} \biggl(1 + \frac{v_i^*(t)-v_{i-1}(t)}{\dot{d}_{max}}\biggl) + \mu_b(t)\label{eq_costate_Jd_bound}
\end{equation}
entry conditions for $t_1^- = \lim_{\epsilon \rightarrow0^+}(t_1-\epsilon)$
\\
\\
    \begin{align}
      & [v_i^*(t_1^-)-v_i^*(t_1)]\Biggl[ -\frac{3\rho S_i C_{{D,0}_i}}{2\eta_i U_i}{v_i^*}^2(t_1^-) + \frac{2C_{{D,2}_i}W_i^2}{\eta_i U_i\rho S_i v_i^*(t_1^-)}\Biggl] \nonumber \\
      & -\frac{\rho S_i C_{{D,0}_i}}{\eta_i U_i} [{v_i^*}^3(t_1)-{v_i^*}^3(t_1^-)] - \frac{2C_{{D,2}_i}W_i^2}{\eta_i U_i\rho S_i} \Biggl[\frac{1}{v_i^*(t_1)} - \frac{1}{v_i^*(t_1^-)} \Biggl] \nonumber \\  
       & +\frac{\alpha}{d_{i,i-1}(t_1^-)}[v_{i-1}(t_1) - v_{i-1}(t_1^-)]+J_{d_{i,i-1}}^*(t_1^-)[v_{i-1}(t_1^-)-v_i^*(t_1)] \nonumber \\
    &- J_{d_{i,i-1}}^*(t_1)[v_{i-1}(t_1)-v_i^*(t_1)] = 0\label{eq_entry_cond}
      \end{align}
and 
\begin{equation}
J_{d_{i,i-1}}^*(t_1)=J_{d_{i,i-1}}^*(t_1^-) +\mu_b(t_1)
\end{equation}
with 
\begin{equation}
\mu_b(t)>0, \ \mu_b(t)(d_{min} - d_{i,i-1}(t)) =0
\end{equation}
   
    \end{enumerate}
\end{theorem}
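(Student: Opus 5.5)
Our approach is to apply the Pontryagin minimum principle (equivalently, the Hamilton–Jacobi–Bellman equation, with $J^*_{x_i}$ and $J^*_{d_{i,i-1}}$ the partial derivatives of the optimal cost-to-go acting as costates) to the OCP \eqref{eq_opt_prob}. The first task is to make $\dot\Gamma_i$ explicit for all-electric aircraft. By assumption 5 the battery power is $\eta_i U_i(-\dot\Gamma_i)$ in magnitude, and it must equal $T_i(v_i+v_w)$ measured relative to the ground. Since $T_i=D_i$ by assumption 1, this gives $-\dot\Gamma_i = D_i(v_i)(v_i+v_w)/(\eta_i U_i)$, with $D_i$ from \eqref{eq_drag}; this is the convention that produces the $v_w$ factors in \eqref{eq_OCP_sol}. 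Substituting $\dot d_{i,i-1}=v_{i-1}-v_i$ into \eqref{eq_PDW}, we write the Hamilton–Jacobi–Bellman equation
\[
0=\min_{v_i}\Bigl\{CI_i+\tfrac{D_i(v_i)(v_i+v_w)}{\eta_iU_i}+\tfrac{\alpha}{d_{i,i-1}}\Bigl(1-\tfrac{v_{i-1}-v_i}{\dot d_{max}}\Bigr)+J^*_{x_i}(v_i+v_w)+J^*_{d_{i,i-1}}(v_{i-1}-v_i)\Bigr\}+J^*_t .
\]
The problem is autonomous in $t$ except through $v_{i-1}$, and $t_f$ is free; we therefore use stationarity together with the free-final-time transversality (Hamiltonian equal to zero along the optimum) to eliminate $J^*_{x_i}$. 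Concretely, $J^*_{x_i}$ is constant because the dynamics do not depend on $x_i$. Setting $\partial H/\partial v_i=0$ and combining it with $H=0$ removes $J^*_{x_i}$. Multiplying the stationarity condition by $(v_i+v_w)$ and subtracting produces exactly the combination $D_i(v+v_w) - D_i'(v)(v+v_w)^2$ and similar terms. Simplifying these algebraically gives $p_i$ in \eqref{eq_OCP_sol}, where the $D_i$ terms collapse to $\frac{2C_{D,2_i}W_i^2}{\eta_iU_i\rho S_i v^2}(2v+v_w)-\frac{\rho S_iC_{D,0_i}v^2}{\eta_iU_i}(v+\tfrac32 v_w)$. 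This algebra is routine but must be checked carefully. The costate equation \eqref{eq_costate_Jd} follows from $\dot J^*_{d}=-\partial H/\partial d_{i,i-1}$ applied to the PDW term. The terminal condition $J^*_d(t_f^*)=0$ holds because $d_{i,i-1}(t_f)$ is free and there is no terminal cost.

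For the state constraint $d_{i,i-1}\ge d_{min}$, we adjoin it directly with multiplier $\mu(t)=\mu_b(t)+\mu_s\delta(t-t_s)$, following the direct adjoining approach for first-order state constraints; the constraint is first order because $\dot d$ depends explicitly on $v_i$. The augmented Hamiltonian gains the term $\mu(d_{min}-d_{i,i-1})$, which adds $+\mu$ to $\dot J^*_d$, giving \eqref{eq_costate_Jd_contact} and \eqref{eq_costate_Jd_bound}. It also yields the complementary slackness conditions. Integrating the Dirac part across $t_s$ gives the jump $J^*_d(t_s)=J^*_d(t_s^-)+\mu_s$, and the same argument at the entry time $t_1$ gives the corresponding jump there.

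At a contact or entry point, the Hamiltonian must be continuous because it has no explicit time dependence and no jump in $H$ is induced by the constraint. We therefore equate $H(t_s^-)=H(t_s)$, eliminating $J^*_{x_i}$ through the stationarity condition evaluated at $t_s^-$; this stationarity condition is the source of the bracket multiplying $[v_i^*(t_s^-)-v_i^*(t_s)]$. Expanding the result yields $q_i$ in \eqref{eq_q_i} and the identical entry condition \eqref{eq_entry_cond}. On a boundary arc, $d_{i,i-1}\equiv d_{min}$ forces $\dot d_{i,i-1}=0$, hence $v_i^*=v_{i-1}$ directly. In every case the admissibility restriction $v_{stall}<v_i^*<v_{max}$ is imposed so that the control constraint stays inactive and stationarity is legitimate.

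The main obstacle is the junction analysis. We must justify the correct sign and form of the jump conditions (that the costate jumps by $+\mu_s$ while $H$ stays continuous) and organize the elimination of the unknown constant $J^*_{x_i}$ at $t_s^-$ so that it reproduces precisely the terms in \eqref{eq_q_i}. We would first derive the jump conditions from the variational inequality with measure-valued multipliers, and then verify the algebra term by term.
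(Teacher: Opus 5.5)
Your overall architecture matches the paper's proof: $J^*_{x_i}$ is constant, stationarity plus $H=0$ eliminates $J^*_{x_i}$, the first-order constraint is adjoined directly with the jump $J^*_{d_{i,i-1}}(t_s)=J^*_{d_{i,i-1}}(t_s^-)+\mu_s$, and $H$ and $J^*_{x_i}$ are continuous at $t_s$ and $t_1$. \textbf{The gap is your energy model.} It is wrong, and the ``routine algebra'' you postpone would not give \eqref{eq_OCP_sol}.

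Propulsive power is thrust times \emph{airspeed}, not ground speed. The paper uses $\dot\Gamma_i=\dot Q_i=-D_i(v_i)\,v_i/(\eta_iU_i)$, i.e.\ \eqref{eq_f_elec}. The wind enters only through the kinematics $\dot x_i=v_i+v_w$, that is, through the term $J^*_{x_i}(v_i+v_w)$.

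With your choice $E(v)=D_i(v)(v+v_w)/(\eta_iU_i)$, eliminating $J^*_{x_i}$ turns the drag-dependent part of $H$ into $E(v)-E'(v)(v+v_w)=-D_i'(v)(v+v_w)^2/(\eta_iU_i)$. (This is not the combination you wrote; the $D_i(v+v_w)$ term cancels.) Its $C_{{D,0}_i}$ part is $-\frac{\rho S_iC_{{D,0}_i}}{\eta_iU_i}(v^3+2v^2v_w+vv_w^2)$ instead of $-\frac{\rho S_iC_{{D,0}_i}}{\eta_iU_i}(v^3+\frac32 v^2v_w)$, so you recover $p_i$ only when $v_w=0$. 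With the paper's $E(v)=D_i(v)\,v/(\eta_iU_i)$ you get $E(v)-E'(v)(v+v_w)=-\bigl[D_i'(v)\,v\,(v+v_w)+D_i(v)\,v_w\bigr]/(\eta_iU_i)$. This expands exactly to the last two terms of \eqref{eq_OCP_sol}.

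The same error breaks the junction conditions. Write $v_s=v_i^*(t_s)$ and $v_s^-=v_i^*(t_s^-)$. After substituting $J^*_{x_i}$ from stationarity at $t_s^-$, your model leaves the residual $\frac{v_w}{\eta_iU_i}\bigl[D_i(v_s)-D_i(v_s^-)-D_i'(v_s^-)(v_s-v_s^-)\bigr]$. Because $D_i$ is strictly convex, this is nonzero whenever $v_w\neq 0$ and $v_s\neq v_s^-$. By contrast, \eqref{eq_q_i} and \eqref{eq_entry_cond} contain no $v_w$. With \eqref{eq_f_elec} the wind cancels, since $J^*_{x_i}(v_s+v_w)-J^*_{x_i}(v_s^-+v_w)=J^*_{x_i}(v_s-v_s^-)$. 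Once you replace your power balance by \eqref{eq_f_elec}, the rest of your plan is the paper's argument.

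Two caveats for when you do the algebra:
\begin{enumerate}
\item A careful expansion does not reproduce the printed \eqref{eq_q_i} verbatim, which appears to contain slips. The coefficient on $[{v_i^*}^3(t_s)-{v_i^*}^3(t_s^-)]$ has magnitude $\frac{\rho S_iC_{{D,0}_i}}{2\eta_iU_i}$, not $\frac{\rho S_iC_{{D,0}_i}}{\eta_iU_i}$. The second bracket term has ${v_i^*}^2(t_s^-)$ in its denominator. The predecessor-speed jump carries $\alpha/(d_{i,i-1}\dot d_{max})$. Dimensional analysis confirms all three corrections.
\item As you noticed, $H\equiv 0$ along the whole trajectory requires $\partial_t H=0$, which fails for time-varying $v_{i-1}(t)$. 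The paper makes the same tacit assumption, so this is not a discrepancy with its proof.
\end{enumerate}
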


\textit{Proof:} The Hamiltonian is given by
\begin{align}
    & H = CI_i + (J_{{\Gamma}_i}^*(t)-1)f_i(D_i(t),\Gamma_i(t)) +\frac{\alpha}{d_{i,i-1}(t)} \biggl(1-\frac{v_{i-1}(t)-v_i(t)}{\dot{d}_{max}}\biggl)  \nonumber \\
    & + J_{{x}_i}^*(t) (v_i(t)+v_w) + J_{d_{i,i-1}}^*(t) (v_{i-1}(t) - v_i(t)) +\mu(t)(d_{min}-d_{i,i-1}(t))  \label{eq_Hamiltonian_1}
\end{align}
where $J_{{x}_i}^*(t) $, $J_{{\Gamma}_i}^*(t)$ and $J_{d_{i,i-1}}^*(t)$ represent the partial derivatives of the optimal cost $J_i^*$ with respect to $x_i(t)$, $\Gamma _i(t)$ and $d_{i,i-1}(t)$, respectively. The parameter $\mu(t) = \mu_{b}(t) + \mu_s \delta(t-t_s)$ is a multiplier that satisfies the Karush–Kuhn–Tucker (KKT) conditions for all time $t\in [0,t_f^*]$:
\begin{equation}
    \mu(t)\geq0, \ \mu(t)(d_{min} - d_{i,i-1}(t)) =0
\end{equation}

The terminal condition $\Psi_i$ can be determined as 
\begin{equation}
    \Psi_i = x_i(t_f^*) - x_{f_i} = 0 \label{eq_terminal_cond}
\end{equation}

Based on (\ref{eq_Hamiltonian_1}) and (\ref{eq_terminal_cond}), the transversality conditions \cite{Pontr} are
\begin{equation}
    H(t^*_f) = 0 \label{eq_transv_H}
\end{equation}
\begin{equation}
    J_{{\Gamma}_i}^*(t_f^*) = \nu_\Gamma \frac{\partial \Psi_i}{\partial \Gamma_i(t)} \biggl|_{t^*_{f}} = 0 \label{eq_transv_Jw}
\end{equation}
\begin{equation}
    J_{d_{i,i-1}}^*(t_f^*) = \nu_d \frac{\partial \Psi_i}{\partial d_{i,i-1}(t)} \biggl|_{t^*_{f}} = 0 \label{eq_transv_Jd}
\end{equation}
where $\nu_\Gamma$ and $\nu_d$ are Lagrange multipliers. 

For all-electric aircraft under the assumptions 1 to 5 we have $\dot{\Gamma}_i(t) = f_i(D_i(t),\Gamma_i(t)) = \dot{Q}_i(t) = -\frac{D_i(t)v_i(t)}{\eta_iU_i}$, which results in
\begin{equation}
    f_i(D_i(t),\Gamma_i(t)) = -\frac{1}{\eta_i U_i }\Biggl(\frac{1}{2} \rho S_i C_{{D,0}_i} v_i^3(t) + \frac{2C_{{D,2}_i}W_i^2}{\rho S_i v_i(t)} \Biggl)\label{eq_f_elec}
\end{equation}
\begin{equation}
    \frac{\partial f_i(D_i(t),\Gamma_i(t))}{\partial v_i(t)} = -\frac{1}{\eta_i U_i }\Biggl(\frac{3}{2} \rho S_i C_{{D,0}_i} v_i^2(t) - \frac{2C_{{D,2}_i}W_i^2}{\rho S_i v_i^2(t)} \Biggl)\label{eq_df_elec}
\end{equation}

As \eqref{eq_f_elec} and \eqref{eq_df_elec} do not depend on the electrical charge $Q_i(t)$, the costate equation for $J_{{\Gamma}_i}^*(t)$ becomes
\begin{equation}
    \dot{J}_{{\Gamma}_i}^*(t) = - \frac{\partial H}{\partial \Gamma_i(t)} = - \frac{\partial H}{\partial Q_i(t)} = 0 \label{eq_cost_J_gamma}
\end{equation}

With the results from \eqref{eq_transv_Jw} and \eqref{eq_cost_J_gamma}, we can conclude that $J_{{\Gamma}_i}^*(t) =0 \ \forall t \in [0,t_f^*]$. The dynamics of the remaining costates are given by
\begin{equation}
    \dot{J}_{{x}_i}^*(t) = - \frac{\partial H}{\partial x_i(t)} = 0 \label{eq_costate_Jx}
\end{equation}
\begin{equation}
    \dot{J}_{d_{i,i-1}}^*(t) = - \frac{\partial H}{\partial d_{i,i-1}(t)} = \frac{\alpha}{d_{i,i-1}^2(t)} \biggl(1 + \frac{v_i^*(t)-v_{i-1}(t)}{\dot{d}_{max}}\biggl) + \mu(t)\label{eq_costate_Jd_gen}
\end{equation}

The solution of (\ref{eq_opt_prob}) requires the minimization of the Hamiltonian described in (\ref{eq_Hamiltonian_1}) with respect to the follower aircraft airspeed $v_i(t)$. The first-order necessary condition for a minimum must be satisfied as per Pontryagin's Minimum Principle \cite{Pontr}
\begin{align}
    & \frac{\partial H}{\partial v_i(t)} = \frac{3\rho S_i C_{{D,0}_i} {v_i^*}^2(t)}{2\eta_i U_i}  - \frac{2C_{{D,2}_i}W_i^2}{\eta_i U_i\rho S_i {v_i^*}^2(t)} + \frac{\alpha}{d_{i,i-1}(t)\dot{d}_{max}}+ J_{x_i}^*(t) - J_{d_{i,i-1}}^*(t) = 0 \label{eq_nec_cond}
\end{align}

The Hamiltonian is not explicitly dependent on time, thus we can establish that $\dot{H} = \partial _tH = 0$, which from (\ref{eq_transv_H}) yields
\begin{equation}
     H = H(t^*_f) = 0 \label{eq_H_zero}
\end{equation}

The remainder of the proof is structured into three cases, summarized in Table \ref{tab_cand}, each corresponding to a candidate optimal solution that depends on the values of the aircraft separation $d_{i,i-1}(t)$, the separation rate $\dot{d}_{i,i-1}(t)$ and the multiplier $\mu(t)$.

\begin{table} [!b]
\centering
   \caption{Optimal solution candidates}\label{tab_cand}
   \centering
   \begin{tabular}{|c |c |c |c |}
   \hline
     \textbf{Case} & \textbf{Aircraft separation} & \textbf{Separation rate} & \textbf{$\mu(t)$}  \\
     \hline
         1 & $d_{i,i-1}(t) > d_{min}$  & - & 0  \\
         2 & $d_{i,i-1}(t_s) = d_{min}$ & $\dot{d}_{i,i-1}(t_s) > 0$ & $\mu_s>0$ \\
         3 & $d_{i,i-1}(t) = d_{min}$ & $\dot{d}_{i,i-1}(t) = 0$ & $\mu_{b}(t)>0$ \\
     \hline
     \end{tabular}
\end{table}

\textbf{Case 1:} For $d_{i,i-1}(t) > d_{min}$, regardless of the value of $\dot{d}_{i,i-1}(t)$, the inequality constraint relative to the aircraft separation is inactive, which leads to $\mu(t) = 0$ in (\ref{eq_costate_Jd_gen}) yielding \eqref{eq_costate_Jd}. Then, solving (\ref{eq_nec_cond}) for $J_{x_i}^*(t)$ results in
\begin{align}
    & J_{x_i}^*(t) = J_{d_{i,i-1}}^*(t)  -\frac{3\rho S_i C_{{D,0}_i} {v_i^*}^2(t)}{2\eta_i U_i}  + \frac{2C_{{D,2}_i}W_i^2}{\eta_i U_i\rho S_i {v_i^*}^2(t)} - \frac{\alpha}{d_{i,i-1}(t)\dot{d}_{max}} \label{eq_Jx_H}
\end{align}

Replacing \eqref{eq_f_elec} and  (\ref{eq_Jx_H}) in (\ref{eq_Hamiltonian_1}) and using the result from (\ref{eq_H_zero}), one obtains (\ref{eq_OCP_sol}). 

\textbf{Case 2:} Consider the time instant $t=t_s$ at which $d_{i,i-1}(t_s) = d_{min}$. To ensure the separation between the follower aircraft and its predecessor remains larger than the safe boundary, the value of the separation rate becomes $\dot{d}_{i,i-1}(t_s)>0$. As a consequence, the multiplier $\mu(t_s) = \mu_s$ produces a jump in the costate $J_{d_{i,i-1}}^*(t_s)$ \cite{Hartl1995}, yielding
\begin{equation}
    J_{d_{i,i-1}}^*(t_s)=J_{d_{i,i-1}}^*(t_s^-) +\mu_s
\end{equation}

Moreover, as the inequality constraint $d_{i,i-1}(t_s) \geq d_{min}$ does not depend explicitly on time, the Hamiltonian is continuous, yielding
\begin{equation}
    H(t_s)=H(t_s^-) \label{eq_Hamil_contin}
\end{equation}
where $H(t_s^-)$ is the value of the Hamiltonian immediately before the constraint becomes active computed at $t = t_s^- = \lim_{\epsilon \rightarrow0^+}(t_s-\epsilon)$. The costate $J_{x_i}^*(t)$ does not depend on the multiplier $\mu(t)$ and based on \eqref{eq_costate_Jx} we can establish that
\begin{equation}
    J_{x_i}^*(t_s) = J_{x_i}^*(t_s^-) \label{eq_Hx_contin}
\end{equation}

Using the conditions from \eqref{eq_f_elec}, \eqref{eq_Hamil_contin} and \eqref{eq_Hx_contin} in \eqref{eq_Hamiltonian_1}, yields \eqref{eq_q_i}. The costate dynamics in \eqref{eq_costate_Jd_contact} is computed by setting $t = t_s$ and $\mu(t) = \mu_s$ in \eqref{eq_costate_Jd_gen}

\textbf{Case 3:} Consider the time $t \in I_t$ at which the aircraft separation remains equal to the minimum value, resulting in $d_{i,i-1}(t) = d_{min}$ and $\dot{d}_{i,i-1}(t)=0$. Thus, during the interval $I_t$, the optimal airspeed $v_i^*(t)$ is
\begin{equation}
    v_i^*(t) = v_{i-1}(t)
\end{equation}

The value of the inequality multiplier becomes $\mu(t) = \mu_b(t)$. As a consequence, the dynamics for the costate $J_{d_{i,i-1}}^*(t)$ are computed by setting $\mu(t) = \mu_b(t)$ in \eqref{eq_costate_Jd_gen}, resulting in \eqref{eq_costate_Jd_bound}. We define $t_1 \in I_t$ as the entry time instant at which the state $d_{i,i-1}(t_1) = d_{min}$. Similarly to Case 2, the Hamiltonian and the costate $J_{x_i}^*(t)$ are continuous, then for $t = t_1^- = \lim_{\epsilon \rightarrow0^+}(t_1-\epsilon)$, we compute $H(t_1)=H(t_1^-)$ and $J_{x_i}^*(t_1) = J_{x_i}^*(t_1^-)$. Using these conditions in \eqref{eq_Hamiltonian_1}, it results in the entry condition \eqref{eq_entry_cond}.

Furthermore, the resulting optimal candidate solutions for $v_i^*(t) \ \forall t \in [0,t_f^*]$ are feasible if $v_{stall} < v_i^*(t) < v_{max}$. Thus, since
\begin{equation}
    \frac{\partial^2 H}{\partial v_i^2(t)} = \frac{3 (\rho S_i)^2C_{{D,0}_i}{v_i}^4(t) + 4C_{{D,2}_i}W_i^2}{\eta_i U_i\rho S_i {v_i}^3(t)} > 0
\end{equation}
the Hamiltonian is strictly convex in $v_i(t)$, therefore, the feasible candidate solutions for $v_i^*(t)$ are minimizers of $H$. This concludes the proof. \qed

The optimal solution derived in Theorem 1 requires the evaluation of the costate $J_{d_{i,i-1}}^*(t)$ dynamics. However, this costate satisfies a zero terminal condition and exhibits only small variations along the optimal trajectory. This observation motivates the following approximation, which leads to a computationally more tractable suboptimal solution.

\begin{corollary}[Suboptimal airspeed]
        Let $v_{\varepsilon,i}(t) \geq 0$ be an airspeed adjust term. The suboptimal airspeed $v_i(t)$ of a follower all-electric aircraft engaged in a longitudinal predecessor-follower platoon as per the OCP (\ref{eq_opt_prob}) is given by \eqref{eq_suboptimal_cases}, if $v_{stall} < v_i(t) < v_{max}$.

\begin{equation}
v_i(t) =
  \begin{cases}
    v_{i-1}(t) - v_{\varepsilon,i}(t), & \text{if } d_{i,i-1}(t) = d_{min} \\[6pt]
    v_i(t) \in \mathbb{R}_{>0} \
    \text{such that } g_i=0, & \text{if } d_{i,i-1}(t) > d_{min} \\[6pt]
  \end{cases} \label{eq_suboptimal_cases}
\end{equation}
where the function $g_i$ is
\begin{align}
   & g_i = CI_i  + \frac{\alpha}{d_{i,i-1}(t)} \Biggl[1-\frac{v_{i-1}(t)+v_w}{\dot{d}_{max}} \Biggl]\nonumber \\
      &   -\frac{\rho S_i C_{{D,0}_i} {v_i}^2(t)}{\eta_i U_i} \biggl(v_i(t) + \frac{3}{2}v_w\biggl)+ \frac{2C_{{D,2}_i}W_i^2}{\eta_i U_i\rho S_i {v_i}^2(t)}(2v_i(t) + v_w)  \label{eq_OCP_subop}
\end{align}
and the airspeed adjustment term $v_{\varepsilon,i}(t)$ is computed by
\begin{equation}
v_{\varepsilon,i}(t) =
  \begin{cases}
     v_{\varepsilon,i}(t) \in \mathbb{R}_{>0} \
    \text{such that } y_i=0, & \text{if } \dot{d}_{i,i-1}(t) > 0 \\[6pt]
    0, & \text{if } \dot{d}_{i,i-1}(t) = 0 \\[6pt]
  \end{cases} \label{eq_v_eps_sub_cases}
\end{equation}
with the function $y_i$ defined as
\begin{align}
   &  y_i = CI_i  + \frac{\alpha}{d_{i,i-1}(t)} \Biggl[1-\frac{v_{i-1}(t)+v_w}{\dot{d}_{max}} \Biggl]\nonumber \\
      & -\frac{\rho S_i C_{{D,0}_i} ( v_{i-1}(t) - v_{\varepsilon,i}(t))^2}{\eta_i U_i} \biggl(v_{i-1}(t) - v_{\varepsilon,i}(t) + \frac{3}{2}v_w\biggl) \nonumber \\
      & + \frac{2C_{{D,2}_i}W_i^2}{\eta_i U_i\rho S_i ( v_{i-1}(t) - v_{\varepsilon,i}(t))^2}[2(v_{i-1}(t) - v_{\varepsilon,i}(t)) + v_w]   \label{eq_OCP_subop_v_eps}
\end{align}
\end{corollary}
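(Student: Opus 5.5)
The corollary follows from Theorem 1 by neglecting the costate $J_{d_{i,i-1}}^*(t)$, i.e.\ setting $J_{d_{i,i-1}}^*(t)\equiv 0$ on $[0,t_f^*]$. This is consistent with the terminal condition $J_{d_{i,i-1}}^*(t_f^*)=0$. It is also consistent with the observation that, in the interior region, $\dot{J}_{d_{i,i-1}}^*$ in \eqref{eq_costate_Jd} is of order $\alpha/d_{i,i-1}^2$, which is small for realistic separations. Because the corollary labels the result \emph{suboptimal}, this is an approximation rather than a derivation. The proof is therefore a substitution argument together with a check that the resulting equations have admissible solutions.

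\textbf{Interior region.} For $d_{i,i-1}(t)>d_{min}$, substituting $J_{d_{i,i-1}}^*=0$ into \eqref{eq_OCP_sol} removes the term $J_{d_{i,i-1}}^*(t)(v_{i-1}(t)+v_w)$, and $p_i=0$ becomes exactly $g_i=0$ in \eqref{eq_OCP_subop}.

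To justify that $g_i=0$ has a positive root, view $g_i$ as a function of $v_i>0$:
\begin{itemize}
\item as $v_i\to 0^+$, the term $\frac{2C_{D,2_i}W_i^2}{\eta_iU_i\rho S_i v_i^2}(2v_i+v_w)$ tends to $+\infty$ whenever $v_w>0$, or behaves like $4C_{D,2_i}W_i^2/(\eta_i U_i\rho S_i v_i)\to+\infty$ when $v_w=0$;
\item as $v_i\to\infty$, the cubic term $-\rho S_iC_{D,0_i}v_i^3/(\eta_iU_i)$ dominates and $g_i\to-\infty$.
\end{itemize}
Continuity then gives a root. For headwinds, the leading behaviour near $0$ is $\propto v_w/v_i^2<0$, so a root is only guaranteed inside the envelope. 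This is why the statement restricts itself to the case $v_{stall}<v_i<v_{max}$.

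\textbf{Boundary.} On the boundary $d_{i,i-1}=d_{min}$, I follow Cases 2 and 3 of Theorem 1.

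For a boundary arc ($\dot d_{i,i-1}=0$), Theorem 1 gives $v_i^*=v_{i-1}$ directly. This is the case $v_{\varepsilon,i}=0$.

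For an instantaneous contact with $\dot d_{i,i-1}(t_s)>0$, Case 2 requires $v_i(t_s)<v_{i-1}(t_s)$, so I parametrize $v_i=v_{i-1}-v_{\varepsilon,i}$ with $v_{\varepsilon,i}>0$. The exact condition $q_i=0$ involves the jump quantities $J_{d_{i,i-1}}^*(t_s^-)$ and $J_{d_{i,i-1}}^*(t_s)=J_{d_{i,i-1}}^*(t_s^-)+\mu_s$, and these are unavailable without the costate. I would argue as follows instead. Under the approximation, the continuity conditions $H(t_s)=H(t_s^-)=0$ with $J_{x_i}^*$ continuous reduce to requiring that the Hamiltonian-zero condition of the interior region still hold at $t_s$, with the unknown speed written as $v_{i-1}-v_{\varepsilon,i}$. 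Substituting this into $g_i$ gives precisely $y_i$ in \eqref{eq_OCP_subop_v_eps}.

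Monotonicity of $g_i$ on the relevant range, which follows from the convexity of $H$ established in Theorem 1, then selects a unique $v_{\varepsilon,i}$.

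\textbf{Main obstacle.} The hard step is justifying the boundary-contact reduction. The jump term $\mu_s$ cannot simply be dropped: dropping it contradicts $\mu_s>0$. I would handle this by treating $J_{d_{i,i-1}}^*\approx 0$ on both sides of $t_s$ as the defining modelling approximation of the suboptimal law, rather than as a consequence of Theorem 1. The constraint is then enforced not through $\mu_s$ but by construction, via the sign condition $v_{\varepsilon,i}>0$, which guarantees $\dot d_{i,i-1}>0$ and hence $d_{i,i-1}\ge d_{min}$.

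Finally, feasibility is checked as in Theorem 1: each candidate is accepted only if $v_{stall}<v_i(t)<v_{max}$.
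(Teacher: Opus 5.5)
Your proposal follows the paper's proof: set $J_{d_{i,i-1}}^*\equiv 0$ in $p_i$ to obtain $g_i$, substitute $v_i=v_{i-1}-v_{\varepsilon,i}$ into $g_i$ to obtain $y_i$, and take $v_{\varepsilon,i}=0$ on the boundary arc (Case 3). Your extra remarks go beyond the paper's two-line substitution argument, which simply treats the whole result as an approximation: the existence argument, the uniqueness claim (which needs $v_i+v_w>0$, since $\partial g_i/\partial v_i=-(v_i+v_w)\,\partial^2 H/\partial v_i^2$, so under headwinds $g_i$ can have two positive roots), and your observation that $J_{d_{i,i-1}}^*\equiv 0$ is incompatible with $\mu_s>0$.
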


\textit{Proof:} Assuming that the time variation of $J_{d_{i,i-1}}^*(t)$ is sufficiently small, we approximate $J_{d_{i,i-1}}^*(t) = J_{d_{i,i-1}}^*(t^*_f) = 0$ in (\ref{eq_OCP_sol}) resulting in (\ref{eq_OCP_subop}). The result in (\ref{eq_OCP_subop_v_eps}) is obtained by setting $v_i(t) = v_{i-1}(t) - v_{\varepsilon,i}(t)$ in (\ref{eq_OCP_subop}). \qed

\textbf{Remark:} The approximation that resulted in the suboptimal solution will be validated by numerically solving the proposed OCP using a shooting method as shown in section \ref{subsec_Shoot}.

\algrenewcommand\alglinenumber[1]{\scriptsize #1}

\newcommand{\AlgBlock}[2]{%
  \State \textbf{#1:} #2
}
\newcommand{\AlgSubBlock}[2]{%
  \State \quad \textbf{#1:} #2
}

\subsection{String stability of pairwise aircraft operations} \label{subsec_strstb}
This section derives a sufficient condition for string stability
in all-electric predecessor–follower longitudinal platoons that satisfy Definition 2.

\begin{theorem}[String stability]
Consider a unidirectional, predecessor-follower, longitudinal platoon of all-electric aircraft governed by the control law $g_i = 0$, with $g_i$  defined in (\ref{eq_OCP_subop}), where $v_i(t)$ and $v_{i-1}(t)$ denote the airspeeds of the follower and predecessor agents, respectively, and $g_i$ is continuously differentiable with respect to $v_i(t)$ and $v_{i-1}(t)$, with $\frac{\partial g_i}{\partial v_i(t)} \neq 0$ and $\frac{\partial g_i}{\partial v_{i-1}(t)} \neq 0$.  Let $\bar{v}_{i-1}(t) = v_{i-1}(t) +\delta v_{i-1}(t)$ be the predecessor's airspeed disturbed by a bounded quantity $\delta v_{i-1}(t)$ and the set $I_{v_{i-1}}$ be defined as the interval $I_{v_{i-1}}=(v_{i-1}(t), v_{i-1}(t) + \delta v_{i-1}(t))$. The platoon is string stable if the string stability coefficient $K_{i,i-1}(t) \leq 1 \ \forall t \in [0,t_f] $, where $K_{i,i-1}(t)$ is
\begin{align}
    & K_{i,i-1}(t) = \Biggl|-\frac{\eta_i U_i \alpha \rho S_i v_i^3(t)}{d_{i,i-1}(t)\dot{d}_{max}(v_i(t) + v_w)[3(\rho S_i)^2C_{{D,0}_i}v_i^4(t)+4C_{{D,2}_i}W_i^2]} \Biggl| \label{eq_SS_cond_applied}
\end{align}
\end{theorem}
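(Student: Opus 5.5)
The plan is to regard the control law $g_i=0$ as an implicit relation between the follower airspeed and the predecessor airspeed at each fixed time $t$. The separation $d_{i,i-1}(t)$ and the wind $v_w$ are treated as frozen parameters at that instant. From this relation I will extract a local gain from $\delta v_{i-1}(t)$ to $\delta v_i(t)$ and bound it by $K_{i,i-1}(t)$. This is exactly what Definition 2 requires.

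\textbf{Step 1: implicit function theorem.} Since $g_i$ is continuously differentiable and $\partial g_i/\partial v_i(t)\neq 0$, the implicit function theorem gives a $C^1$ map $v_i=\phi_i(v_{i-1})$ near the operating point, with
\begin{equation*}
\frac{d v_i}{d v_{i-1}} = -\frac{\partial g_i/\partial v_{i-1}(t)}{\partial g_i/\partial v_i(t)}.
\end{equation*}
The assumption $\partial g_i/\partial v_{i-1}(t)\neq 0$ ensures this ratio is well defined and nonzero.

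\textbf{Step 2: mean value theorem.} When the predecessor is disturbed to $\bar v_{i-1}(t)=v_{i-1}(t)+\delta v_{i-1}(t)$, the follower moves to $\bar v_i(t)=\phi_i(\bar v_{i-1}(t))$. Applying the mean value theorem on $I_{v_{i-1}}$ gives
\begin{equation*}
\delta v_i(t)=\phi_i'(\xi)\,\delta v_{i-1}(t)
\end{equation*}
for some $\xi\in I_{v_{i-1}}$. Hence $|\delta v_i(t)|\le |\phi_i'(\xi)|\,|\delta v_{i-1}(t)|$, and $|\phi_i'|\le 1$ on the interval implies string stability in the sense of Definition 2.

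\textbf{Step 3: compute the two partials from \eqref{eq_OCP_subop}.} The predecessor airspeed enters only through the PDW term, so
\begin{equation*}
\frac{\partial g_i}{\partial v_{i-1}(t)}=-\frac{\alpha}{d_{i,i-1}(t)\dot d_{max}}.
\end{equation*}
For the follower airspeed, differentiate the two remaining terms separately:
\begin{itemize}
\item $\frac{d}{dv}\big[v^2(v+\tfrac32 v_w)\big]=3v(v+v_w)$;
\item $\frac{d}{dv}\big[(2v+v_w)/v^2\big]=-2(v+v_w)/v^3$.
\end{itemize}
Both pieces therefore share the factor $(v_i+v_w)$, and
\begin{equation*}
\frac{\partial g_i}{\partial v_i(t)}=-\frac{(v_i+v_w)}{\eta_iU_i}\Bigl[3\rho S_iC_{D,0_i}v_i+\frac{4C_{D,2_i}W_i^2}{\rho S_i v_i^3}\Bigr].
\end{equation*}
Bringing this over the common denominator $\rho S_i v_i^3$ and forming the ratio should reproduce the rational expression in \eqref{eq_SS_cond_applied}. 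I will carry out the exponent bookkeeping on $v_i$ carefully, since the closed form is sensitive to it. Taking absolute values then yields $K_{i,i-1}(t)$.

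\textbf{Main obstacle: the intermediate point $\xi$.} The derivative is evaluated at $\xi$, not at the nominal airspeed, while the theorem states the condition in terms of $v_i(t)$ and $d_{i,i-1}(t)$. I will handle this by interpreting the hypothesis $K_{i,i-1}\le 1$ for all $t$ as holding along every admissible state reached as the predecessor airspeed ranges over $I_{v_{i-1}}$. Equivalently, the bound holds uniformly over the corresponding follower airspeeds $\phi_i(I_{v_{i-1}})$, which all lie in $(v_{stall},v_{max})$.

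A related point is that $d_{i,i-1}$ depends on the history of both airspeeds. I will argue that instantaneously $d_{i,i-1}(t)$ is a state, not a function of the current $v_{i-1}(t)$. It is therefore legitimately held fixed when differentiating at time $t$.

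Finally, iterating the bound along $i=1,\dots,N$ gives non-amplification through the whole platoon.
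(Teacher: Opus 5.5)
Your proposal is correct and follows the paper's proof essentially step for step. It uses the implicit function theorem to get $v_i=\phi(v_{i-1})$, then the mean value theorem on $I_{v_{i-1}}$ with a supremum bound on $|\phi'|$ over the disturbance interval (the paper identifies this supremum condition with $K_{i,i-1}(t)\le 1$ just as you do, only without saying so explicitly), and then the two explicit partial derivatives of $g_i$. Your $\partial g_i/\partial v_i = -\frac{(v_i+v_w)}{\eta_i U_i}\bigl[3\rho S_i C_{{D,0}_i} v_i + \frac{4C_{{D,2}_i}W_i^2}{\rho S_i v_i^3}\bigr]$ is the correct derivative and, over the common denominator $\rho S_i v_i^3$, reproduces the stated $K_{i,i-1}(t)$ exactly; the paper's printed intermediate has an exponent slip ($v_i^2$ in both places) that would give $v_i^2$ rather than $v_i^3$ in the numerator.
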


\textit{Proof:} From $g_i = 0$, it follows, via the Implicit Function Theorem, that there exists a differentiable function $v_i(t) = \phi(v_{i-1}(t))$, assuming $\frac{\partial g_i}{\partial v_i(t)} \neq 0$ and $\frac{\partial g_i}{\partial v_{i-1}(t)} \neq 0$. Consider a bounded disturbance $\delta v_{i-1}(t)$ in the predecessor's airspeed, such that $|\delta v_{i-1}(t)|<\bar{\epsilon}$ for any $\bar{\epsilon}>0$. The disturbed predecessor's airspeed is
\begin{equation}
    \bar{v}_{i-1}(t) = v_{i-1}(t) +\delta v_{i-1}(t),
\end{equation}
which induces a corresponding disturbance in the follower’s airspeed given by
\begin{equation}
    \bar{v}_{i}(t) = \phi(\bar{v}_{i-1}(t)) = \phi(v_{i-1}(t) +\delta v_{i-1}(t))
\end{equation}

Hence, the resulting deviation in the follower’s airspeed is
\begin{align}
    \delta v_i(t) & = \phi(\bar{v}_{i-1}(t))-\phi(v_{i-1}(t))  = \phi(v_{i-1}(t) +\delta v_{i-1}(t)) - \phi(v_{i-1}(t)) \label{eq_follower_dev}
\end{align}

The Mean Value Theorem (MVT) states that $\phi(b) - \phi(a) = \phi^\prime (c)(b-a)$ for $\phi$ continuous in $[a,b]$ and differentiable in $(a,b)$, with $c \in (a,b)$. Applying the MVT to (\ref{eq_follower_dev}) with $a = v_{i-1}(t)$ and $b = v_{i-1}(t) + \delta v_{i-1}(t)$ yields
\begin{equation}
    \phi(v_{i-1}(t)+ \delta v_{i-1}(t)) - \phi(v_{i-1}(t)) = \phi^ \prime (c) \delta v_{i-1}(t), c \in I_{v_{i-1}} \label{eq_MVT_dev}
\end{equation}

From the result in (\ref{eq_MVT_dev}), we can establish that 
\begin{align}
    |\delta v_i(t)| & = |\phi(v_{i-1}(t) +\delta v_{i-1}(t)) - \phi(v_{i-1}(t))| = |\phi^ \prime (c)| |\delta v_{i-1}(t)| \label{eq_MVT_expand}
\end{align}

As $|\phi^ \prime (c)| \leq \sup_{v_{i-1}(t) \in I_{v_{i-1}}}|\phi^ \prime (v_{i-1}(t))|$ we can also write 
\begin{equation}
    |\delta v_i(t)| \leq \sup_{v_{i-1}(t) \in I_{v_{i-1}}}|\phi^ \prime (v_{i-1}(t))| |\delta v_{i-1}(t)| \label{eq_SS_cond_sup}
\end{equation}

To satisfy the string stability condition as specified in \textit{Definition 2}, it suffices that
\begin{equation}
   \sup_{v_{i-1}(t) \in I_{v_{i-1}}}|\phi^ \prime (v_{i-1}(t))| \leq 1 \label{eq_suf_cond_SS}
\end{equation}

From the Implicit Function Theorem, the derivative of the implicit relation is given by
\begin{equation}
    |\phi^ \prime (v_{i-1}(t))| = \Biggl|\frac{\partial v_i(t)}{\partial v_{i-1}(t)}\Biggl| = \Biggl|- \frac{\frac{\partial g_i}{\partial v_{i-1}(t)}}{\frac{\partial g_i}{\partial v_i(t)}}\Biggl| \label{eq_IFT_res}
\end{equation}

Therefore, combining (\ref{eq_suf_cond_SS}) and (\ref{eq_IFT_res}), the general sufficient condition for string stability is established as 
\begin{equation}
    K_{i,i-1}(t) = \Biggl|- \frac{\frac{\partial g_i}{\partial v_{i-1}(t)}}{\frac{\partial g_i}{\partial v_i(t)}}\Biggl| \quad \leq \quad 1. \label{eq_SS_cond}
\end{equation}

Considering the function $g_i$ defined in (\ref{eq_OCP_subop}), the following partial derivatives are computed
\begin{equation}
    \frac{\partial g_i}{\partial v_{i-1}(t)} = -\frac{\alpha}{d_{i,i-1}(t)\dot{d}_{max}} \label{eq_partialg_partialvi1}
\end{equation}
\begin{equation}
    \frac{\partial g_i}{\partial v_{i}(t)} = -\frac{(v_i(t) + v_w)}{\eta_i U_i} \Biggl(3 \rho S_i C_{{D,0}_i}v_i^2(t) + \frac{4 C_{{D,2}_i} W_i^2}{\rho S_i v_i^2(t)} \Biggl)\label{eq_partialg_partialvi}
\end{equation}

Using (\ref{eq_partialg_partialvi1}) and (\ref{eq_partialg_partialvi}) in (\ref{eq_SS_cond}), results in (\ref{eq_SS_cond_applied}). \qed

\section{Results and validation}\label{sec_Results}

\subsection{Simulation parameters} \label{subsec_Sim_Par}
Simulations were conducted in MATLAB on a laptop with a configuration of 16 GB RAM and an 11$^{th}$ Gen Intel$^R$ Core$^{TM}$ i5-1135G7 2.40GHz CPU. Two aircraft models with parameters defined in Table \ref{tab_sim_param} are considered, as follows:

\begin{itemize}
    \item Model $A_1$: Yuneec International E430 \cite{E430}.
    \item Model $A_2$: Pipistrel Velis Electro \cite{Velis}
\end{itemize}

\begin{table}[htbp]
\centering
\caption{Simulation Parameters}
\label{tab_sim_param}

\begin{tabular}{|c c c|}
\hline
\textbf{Parameter} & \textbf{$A_1$} & \textbf{$A_2$} \\
\hline
$S_i$ $[\mathrm{m^{2}}]$  & 11.37 & 9.5* \\ 
$W_i$ $[\mathrm{kN}]$ & 4.61\textsuperscript{1} & 5.40\textsuperscript{1} \\
$C_{{D,0}_i}$ & 0.035\textsuperscript{2} & 0.039* \\
$C_{{D,2}_i}$ & 0.009\textsuperscript{2} & 0.008* \\
$v_{i,max}$ $[\mathrm{km/h}]$ & 150 & 172 \\
$U_i$ $[\mathrm{V}]$ & 133.2 & 200* \\
$\eta_i$ & 0.7\textsuperscript{2} & 0.7* \\
$Q_i(0)$ $[\mathrm{C}]$ & $3.6\times10^{5}$* & $3\times10^{5}$* \\[1ex]
\hline
\end{tabular}

\begin{tablenotes}
\item[*] Estimated.
\item[1] Aircraft maximum take-off weight, which is constant throughout the flight.
\item[2] From \cite{LiRodrigues2023}.
\end{tablenotes}

\end{table}

\subsection{Pairwise dynamic workload metric} \label{subsec_compl_metric_eval}

The novel pairwise dynamic workload (PDW) function defined in (\ref{eq_PDW}) is further analyzed in this section. A comparison of the proposed PDW with other airspace complexity metrics from the literature is shown in Table \ref{table_matric_comparison}.

\begin{table} [hbt]
   \caption{Comparison of Airspace Complexity Metrics for Pairwise Aircraft Operations}\label{table_matric_comparison}
   \centering
   \begin{tabular}{|c| c c c c|}
   \hline
   \textbf{Parameter} & \textbf{\cite{Pawlak1996}} & \textbf{\cite{Chatterji2001}} & \textbf{\cite{Laudeman1998}} & \textbf{PDW} \\
   \hline
   Horizontal Separation & \checkmark & \checkmark & \checkmark & \checkmark \\
   Separation Rate & \checkmark & \checkmark & \checkmark & \checkmark \\
   Closed-form & \ding{55} & \ding{55} & \checkmark & \checkmark \\
   Temporal evolution of parameters & \ding{55} & \checkmark & \ding{55} & \checkmark \\
     \hline
     \end{tabular}
\end{table}

The metrics presented in \cite{Pawlak1996} and \cite{Chatterji2001} do not have a closed-form equation that relates the aircraft separation and separation rate. For this reason, the proposed PDW metric will be evaluated alongside the dynamic density (DD) measure introduced in \cite{Laudeman1998}. The comparison is conducted for two distinct scenarios involving a pairwise predecessor–follower procedure over a 60-minute duration. In these scenarios, $v_0(t)$ and $v_1(t)$ denote the airspeed of the predecessor and the follower aircraft, respectively.  The profiles and values of $v_0(t)$ and $v_1(t)$ were selected to generate multiple airspeed variations while maintaining a constant sign of $\dot{d}_{i,i-1}(t)$. The two scenarios are described below.

\begin{enumerate}
    \item Scenario 1, aircraft converging: the predecessor and follower aircraft airspeeds are, respectively,

    \begin{equation}
       v_0(t) =  \begin{cases}
           600 \ \mathrm{km/h}, \text{for} \ t \in [0,10), [20,30), [40,50) \\
           650 \ \mathrm{km/h}, \text{for} \ t \in [10,20), [30,40), [50,60)           
       \end{cases}
    \end{equation}

    \begin{equation}
        v_1(t) = \begin{cases}
            700 \ \mathrm{km/h}, \text{for} \ t \in [0,30) \\
            670 \ \mathrm{km/h}, \text{for} \ t \in [30,60)
        \end{cases}
    \end{equation}

    \item Scenario 2, aircraft diverging:  the predecessor and follower aircraft airspeeds are, respectively,
    
        \begin{equation}
        v_0(t) = \begin{cases}
            700 \ \mathrm{km/h}, \text{for} \ t \in [0,30) \\
            670 \ \mathrm{km/h}, \text{for} \ t \in [30,60)
        \end{cases}
    \end{equation}

        \begin{equation}
       v_1(t) =  \begin{cases}
           600 \ \mathrm{km/h}, \text{for} \ t \in [0,10), [20,30), [40,50) \\
           650 \ \mathrm{km/h}, \text{for} \ t \in [10,20), [30,40), [50,60)           
       \end{cases}
    \end{equation}

\end{enumerate}

Both PDW and DD metrics were computed for the two scenarios and scaled using min–max normalization, yielding values between 0 (minimum) and 1 (maximum), as shown in Fig. \ref{fig_complexity_met_comp}. The left plots present the airspeed profiles for each scenario, while the right plots depict the evolution of airspace complexity metrics over time. In Scenario 1, $v_1(t) > v_0(t)$, leading to the convergence of the two aircraft, whereas in Scenario 2, $v_1(t) < v_0(t)$, resulting in increasing separation. In both scenarios, DD produced identical values, since it only accounts for the number of aircraft that changed airspeed within each 2-min interval. In contrast, the PDW metric captured the continuous influence of both the separation and its rate of change on airspace complexity, increasing when the aircraft are converging and decreasing when they are diverging. Furthermore, the rate of separation between the aircraft was observed to affect the slope of the PDW curve, as higher separation rates lead to a steeper slope. Unlike DD, PDW reflects gradual changes and transient behaviors in pairwise aircraft interactions. Moreover, PDW highlights situations where conflicts may be emerging even before DD detects them.

\begin{figure}[ht]
\centerline{\includegraphics[scale=0.8]{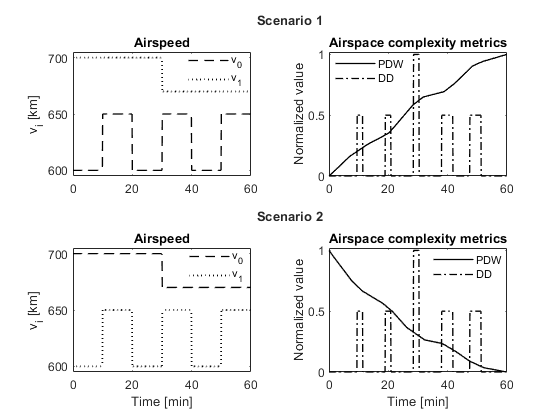}}
\caption{Comparison of airspace complexity metrics}
\label{fig_complexity_met_comp}
\end{figure}

\subsection{Validation of suboptimal solution using a shooting method} \label{subsec_Shoot}
To validate the assumption $J_{d_{i,i-1}}^*(t) = 0$ $\forall \ t \in [0, t^*_f]$, which led to the suboptimal solution (\ref{eq_OCP_subop}), a shooting method was developed to determine $J_{d_{i,i-1}}^*(0)$. Since $ J_{d_{i,i-1}}^*(t^*_f)=0$ from (\ref{eq_transv_Jd}), the problem is cast as an Initial Value Problem (IVP), where (\ref{eq_dot_x})–(\ref{eq_dot_Gamma}) are simulated for different initial estimates until these terminal conditions are met. Algorithm \ref{alg_shooting} outlines the shooting method, with $J_{d_{i,i-1}}^{*[k]}$ denoting the $k$-th iteration value of the costate $J_{d_{i,i-1}}(t)$. A flight scenario that emulates an actual operation is proposed to apply Algorithm \ref{alg_shooting}, with the parameters presented in Table \ref{tab_shoot_param}. Moreover, we consider the all-electric aircraft model $A_1$ (as per section \ref{subsec_Sim_Par}), $x_{0_i} = 0$ and $v_w = 10$ $\mathrm{km/h}$. The parameters in Algorithm \ref{alg_shooting} were set as $tol = 10^{-6}$, $\beta = 0.5$, $k_{max} = 30$.

\begin{algorithm}
\caption{Shooting method for costate approximation} \label{alg_shooting}
\begin{algorithmic}

\AlgBlock{1. Initialization}{}
    \State \quad (a) Provide an estimate for $v_i^* = v^*_{i,est}$ based on knowledge of the flight envelope of aircraft $A_i$.
    \State \quad (b) Solve $p_i = 0$ in (\ref{eq_OCP_sol}) for $J_{d_{i,i-1}}^*(0)$ with the value of $v_i^*$ previously defined and a given $v_{i-1}$.
    \State \quad (c) Initialize the parameter $k = 0$.
    \State \quad (d) Define a tolerance $tol$, a maximum number of iterations $k_{max}$, and step size $\beta$.

\AlgBlock{2. Do}{}
    \State \quad (a) Simulate the system (\ref{eq_dot_x})-(\ref{eq_dot_Gamma}) until $x_i(t^*_{f})=x_{{f}_i}$  using a numerical method that solves ordinary differential equations. The value of $v_i^*$ is computed by setting $p_i = 0$ in (\ref{eq_OCP_sol}) using a root solver for nonlinear functions.
    \State \quad (b) Compute error $\epsilon_1 = J_{d_{i,i-1}}^{*[k]}(t^*_f) - J_{d_{i,i-1}}^*(t^*_f)$.    
    \State \quad (c) Update $J_{d_{i,i-1}}^{*[k+1]}(0) = J_{d_{i,i-1}}^{*[k]}(0) - \beta \epsilon_1$.    
    \State \quad (d) Update $k = k + 1$.

\AlgBlock{3. Termination condition}{Stop if $|\epsilon_1| < tol$ or $k \geq k_{max}$.}

\end{algorithmic}
\end{algorithm}

\begin{table} [hbt]
\centering
   \caption{Shooting Method Validation Parameters}\label{tab_shoot_param}
   \centering
   \begin{tabular}{|c |c | c| c| }
   \hline
     \textbf{Parameter} & \textbf{Value} & \textbf{Parameter} & \textbf{Value} \\
     \hline
         $x_{f_i}$ $[\mathrm{km}]$  & 160 & $h_c$ $[\mathrm{km}]$  & 1 \\
         $\alpha$ $[\mathrm{C.m/s}]$ & 10$^4$ & $CI_i$ $[\mathrm{C/s}]$ & 0.2  \\
         $\rho$ $[\mathrm{kg/m^3}]$  & 1.112 & $v_{i-1}$ $[\mathrm{km/h}]$ & 135 \\
     \hline
     \end{tabular}
\end{table}

Fig. \ref{fig_shooting_AEA} shows the results of the proposed shooting method for different estimates of the aircraft airspeed $v^*_{i,est}$ within its flight envelope. We observe that the initial value of the costate $J_{d_{i,i-1}}^*$ converges rapidly to zero, which is also the value of  $J_{d_{i,i-1}}^*(t^*_f)$. This justifies and validates the assumption that $J_{d_{i,i-1}}^*(t) = J_{d_{i,i-1}}^*(t^*_f)=0$ $\forall \ t \in [0, t^*_f]$.

\begin{figure}[ht!]
\centerline{\includegraphics[scale=0.7]{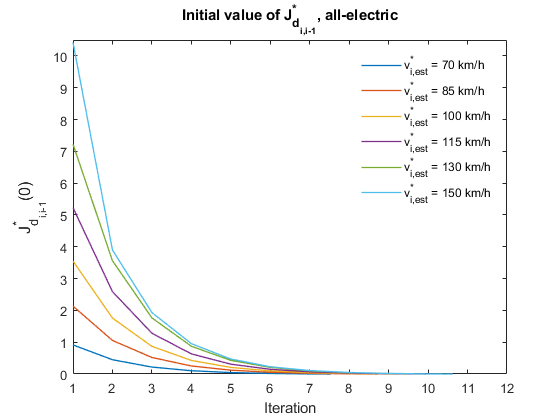}}
\caption{Shooting Method Result, all-electric aircraft}
\label{fig_shooting_AEA}
\end{figure}

\subsection{Application of the proposed methodology in a simulated flight scenario} \label{subsec_Scen}

This section validates the proposed unified framework for computing the airspeed of all-electric aircraft operating in string-stable predecessor–follower platoons under longitudinal wind conditions. As noted in Section \ref{sec_RelatedW}, the literature provides limited studies on string stability in aircraft platoons. Table \ref{table_ststab_comparison} presents a comparison between the proposed approach and existing work. The validation is conducted using a platoon of three all-electric aircraft operating in a constant-altitude air corridor (Fig. \ref{fig_AEA_corridor}).

\begin{figure}[ht!]
\centerline{\includegraphics[scale=0.3]{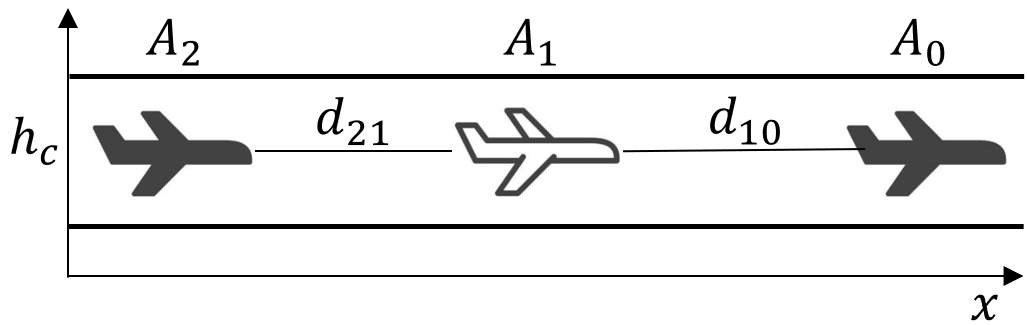}}
\caption{All-electric aircraft platoon in an AAM corridor}
\label{fig_AEA_corridor}
\end{figure}

\begin{table} [ht!]
   \caption{Comparison of String-stable aircraft procedures}\label{table_ststab_comparison}
   \centering
   \begin{tabular}{|c| c c c c c c|}
   \hline
   \textbf{Parameter} & \textbf{\cite{Allen2002}} & \textbf{\cite{Swieringa2015}} & \textbf{\cite{Weitz2018}} & \textbf{\cite{Riehl2022}} & \textbf{\cite{Stephens2023}} & \textbf{Proposed} \\
   \hline
   Applicable to all-electric aircraft & \ding{55} & \ding{55} & \ding{55} & \ding{55} & \ding{55} & \checkmark \\
   \hline
   Nonlinear aircraft dynamics & \ding{55} & \ding{55} & \ding{55} & \ding{55} & \checkmark & \checkmark \\
   \hline
   Wind disturbance & \ding{55} & \checkmark & \ding{55} & \checkmark & \checkmark & \checkmark \\
   \hline
   Operational costs & \checkmark & \ding{55} & \ding{55} & \checkmark & \ding{55} & \checkmark \\
   \hline
   Airspace complexity (ATC workload) & \ding{55} & \checkmark & \checkmark & \ding{55} & \ding{55} & \checkmark \\
   \hline
   Optimality guarantees & \ding{55} & \ding{55} & \ding{55} & \ding{55} & \ding{55} & \checkmark \\
   \hline
   Heterogeneous platoons & \ding{55} & \checkmark & \ding{55} & \ding{55} & \checkmark & \checkmark \\
     \hline
     \end{tabular}
\end{table}

Two different aircraft models are considered to represent heterogeneous platoon parameters. The aircraft are denoted as $A_0$ (leader), $A_1$ (first follower), and $A_2$ (second follower), with parameters provided in Table \ref{tab_sim_param}. Three scenarios are analyzed (Table \ref{tab_sim_AEA_CI}), each with different cost indices, resulting in distinct airspeed profiles and separation trajectories. Additional trajectory parameters are listed in Table \ref{tab_sim_AEA_plat_param}, where subscripts indicate the aircraft’s position in the platoon. The airspeed of each follower aircraft was computed by \eqref{eq_suboptimal_cases}. For each simulation scenario, the resulting airspeed profiles are shown in Figs. \ref{fig_resul_scenarios}(a), (d) and (g), the aircraft separation trajectories are presented in Figs. \ref{fig_resul_scenarios}(b), (e) and (h) and the battery charge depletion curves are depicted in Figs. \ref{fig_resul_scenarios}(c), (f) and (i). In all cases, the aircraft separation remained above the defined minimum value during the entire procedure.

\begin{table} [t!]
\centering
   \caption{Air corridor operation of all-electric aircraft}\label{tab_sim_AEA_plat_param}
   \centering
   \begin{tabular}{|c |c | c | c |}
   \hline
     \textbf{Parameter} & \textbf{Value} & \textbf{Parameter} & \textbf{Value} \\
     \hline
        $x_{0_0}$ $[\mathrm{km}]$ & 10 & $x_{f_0}$ $[\mathrm{km}]$ & 80 \\
        $x_{0_1}$ $[\mathrm{km}]$ & 6 & $x_{f_1}$ $[\mathrm{km}]$ & 74 \\
        $x_{0_2}$ $[\mathrm{km}]$ & 4 & $x_{f_2}$ $[\mathrm{km}]$ & 68 \\
        $\rho$ $[\mathrm{kg/m^3}]$ & 1.112 &  $h_c$ $[\mathrm{km}]$ & 1 \\
        $v_w$ $[\mathrm{km/h}]$ & 0 & $\alpha$ $[\mathrm{C.m/s}]$ & 10$^4$ \\
        $v_{0}$ $[\mathrm{km/h}]$ & 100 & $d_{min}$ $[\mathrm{km}]$ & 1 \\ 
     \hline
     \end{tabular}
\end{table}

\begin{table} [t!]
\centering
   \caption{Simulated flight scenarios}\label{tab_sim_AEA_CI}
   \centering
   \begin{tabular}{|c |c | c |}
   \hline
     \textbf{Scenario} & \textbf{$CI_1$} & \textbf{$CI_2$} \\
     \hline
        1 & 90 & 30 \\
        2 & 100 & 35 \\
        3 & 70 & 40 \\
     \hline
     \end{tabular}
\end{table}

\begin{figure}[!h]
\centerline{\includegraphics[scale=0.6]{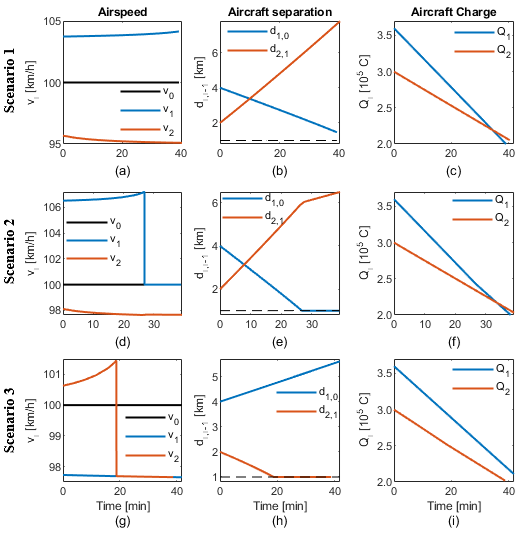}}
\caption{Simulated flight scenario for different cost indices}
\label{fig_resul_scenarios}
\end{figure}

\subsection {Effect of longitudinal wind}

The effect of longitudinal wind on the computed airspeed is illustrated in Fig. \ref{fig_resul_wind} for different aircraft separation distances (Fig. \ref{fig_resul_wind}(a)), scaling factor values $\alpha$ (Fig. \ref{fig_resul_wind}(b)), and cost indices (Fig. \ref{fig_resul_wind}(c)). The results indicate that stronger headwinds ($v_w < 0$) lead to higher values of $v_i$ to compensate for their effect, whereas stronger tailwinds ($v_w > 0$) result in lower values of $v_i$, as the wind favors the aircraft motion. These trends are consistent with the expected behavior based on physical observation.

\begin{figure}[H]
\centerline{\includegraphics[scale=0.6]{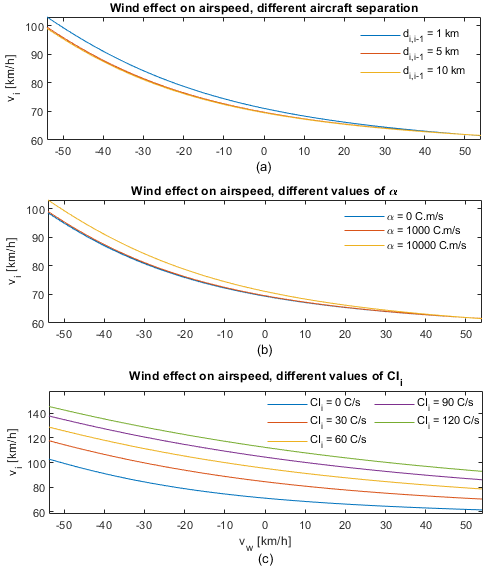}}
\caption{Effect of the wind in the airspeed computation, where $v_w > 0$ denotes tailwind and $v_w < 0$ denotes headwind}
\label{fig_resul_wind}
\end{figure}

\subsection {String stability}

String stability is assessed for the three simulated scenarios. The string stability coefficient was computed for each aircraft pair in each scenario using \eqref{eq_SS_cond_applied}, and the corresponding maximum values are reported in Table \ref{table_SS_coef}. Since $K_{i,i-1}(t) < 1 \ \forall t \in [0,t_f]$ in all cases, the platoon is string stable in every scenario.

\begin{table} [ht]
\centering
   \caption{Maximum value of the string stability coefficient}\label{table_SS_coef}
   \centering
   \begin{tabular}{|c |c | c |}
   \hline
     \textbf{Scenario} & \textbf{$K_{1,0}(t)$} & \textbf{$K_{2,1}(t)$} \\
     \hline
        1 & 1.3 $\mathrm{x}10^{-2}$ & 1.7 $\mathrm{x}10^{-2}$  \\
        2 & 2.0 $\mathrm{x}10^{-2}$  & 1.6 $\mathrm{x}10^{-2}$   \\
        3 & 5.4 $\mathrm{x}10^{-3}$ & 3.3 $\mathrm{x}10^{-2}$ \\
     \hline
     \end{tabular}
\end{table}
Additionally, the scenarios were evaluated under an airspeed disturbance defined as
\begin{equation}
\delta v_{i-1}(t) = \bar{w}\sin\biggl(\frac{\pi(t-t_0)}{\Delta t}\biggl), t \in [t_0,t_0+\Delta t]
\end{equation}
with the disturbance amplitude $\bar{w}=18$ $\mathrm{km/h}$, initial time $t_0=8.3$ $\mathrm{min}$ and duration of $\Delta t=1.7$ $\mathrm{min}$. Fig. \ref{fig_resul_string_st_sc} depicts the airspeed error $v_i-\bar{v}_{i-1}$ for the three scenarios. The amplitude of the airspeed error for the second pair ($A_2,A_1$) remains consistently smaller than that of the first pair ($A_1,A_0$), indicating attenuation of the disturbance as it propagates along the platoon and thereby confirming string stability.

The influence of longitudinal wind and the scaling parameter $\alpha$ on the string stability coefficient is analyzed in Fig. \ref{fig_resul_K_sc}. As shown in Fig. \ref{fig_resul_K_sc}(a), increasing headwind intensity leads to larger values of $K_{i,i-1}$, resulting in stronger amplification of disturbances in the predecessor aircraft’s airspeed as they propagate along the platoon. The effect of the scaling parameter $\alpha$ is presented in Fig. \ref{fig_resul_K_sc}(b), where larger values of $\alpha$ also produce higher values of $K_{i,i-1}$, indicating that more complex airspace configurations lead to stronger propagation of disturbances along the platoon.

\begin{figure}[H]
\centerline{\includegraphics[scale=0.68]{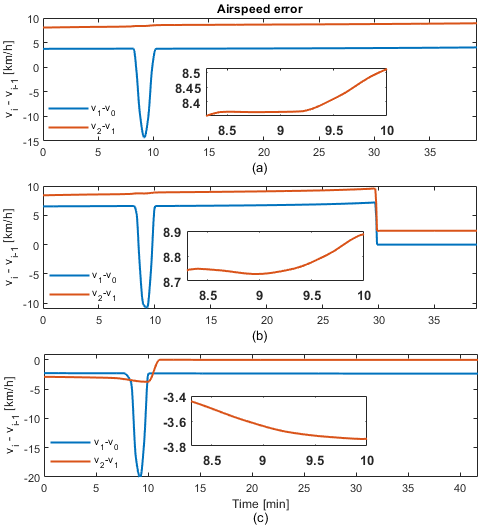}}
\caption{Airspeed error caused by a disturbance in the leader aircraft’s airspeed, for each scenario
}
\label{fig_resul_string_st_sc}
\end{figure}

\begin{figure}[H]
\centerline{\includegraphics[scale=0.7]{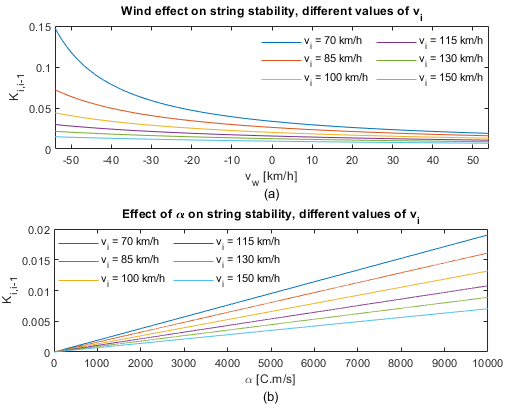}}
\caption{Effect of wind and $\alpha$ on string stability}
\label{fig_resul_K_sc}
\end{figure}

\section{Conclusions}\label{sec_Conc}
This paper presents a novel optimal control framework to compute the airspeeds of all-electric aircraft platoons operating under predecessor–follower procedures in the presence of longitudinal wind. The proposed framework addresses a class of airspace coordination problems involving pairwise aircraft interactions, for which a novel cost function is introduced to perform a trade-off between aircraft operating costs and the complexity of these airspace procedures, which is modeled by the proposed pairwise dynamic workload (PDW) metric. A closed-form suboptimal airspeed solution is derived, together with a sufficient condition that ensures string stability of the platoon. Simulation results demonstrate the application of the proposed methodology to constant-altitude air corridor operations of all-electric aircraft, where both minimum separation and string stability are maintained throughout the procedure. The influence of longitudinal wind on follower aircraft airspeed is evaluated, along with the impact of wind and the airspace complexity scaling parameter on the string stability coefficient. Overall, the proposed framework provides a foundation for autonomous and supervised operations of all-electric aircraft, supporting low-altitude procedures in uncontrolled airspace and facilitating the integration of AAM/UAM concepts within the broader air traffic management system.

\section*{Acknowledgments}
The authors would like to acknowledge the financial support of the Fonds de Recherche du Québec – Nature et Technologies (FRQNT) and the Natural Sciences and Engineering Research Council of Canada (NSERC).

\section*{Data availability}
All data generated or analyzed during this study are included in the manuscript.

\section*{Funding}
Funding for this project is provided by the Fonds de Recherche du Québec – Nature et Technologies (FRQNT), grant number: 366655.

\section*{Author contribution}
Lucas Souza e Silva developed the methodology, performed the analysis, and prepared the original draft. Luis Rodrigues contributed to the conceptual development of the study, provided supervision and technical guidance, and reviewed and edited the manuscript. Both authors approved the final version of the manuscript.

\section*{Declarations}
Conflict of interest: the authors have no relevant financial or non-financial interests to disclose.

\bibliography{FIM_ATC_bibliography}

@misc{NYT,
  title={{Drunk and Asleep on the Job: Air Traffic Controllers Pushed to the Brink}},
  author={Emily Steel and Sydney Ember},
  url={https://www.nytimes.com/2023/12/02/business/air-traffic-controllers-safety.html}
}

@misc{FAA_FIM,
  title={{ADS-B In Interval Management}},
  author={{Federal Aviation Administration}},
}

@misc{Kratos,
  title={{Kratos Unmanned Systems Demonstrates Leader Follower Platooning Technology in Quebec's Forestry Sector in a Timber Hauling Operation}},
  author={{GlobeNewswire}},
  url={https://www.globenewswire.com/news-release/2025/01/27/3015553/224/en/Kratos-Unmanned-Systems-Demonstrates-Leader-Follower-Platooning-Technology-in-Quebec-s-Forestry-Sector-in-a-Timber-Hauling-Operation.html}
}

@article{Barzkar_Ghassemi2022,
author = {A. Barzkar and M. Ghassemi},
title = {Components of Electrical Power Systems in More and All-Electric Aircraft: A Review},
journal = {IEEE Transactions on Transportation Electrification},
volume = {8},
number = {4},
year = {2022}
}

@article{Path,
author = {Steven E. Shladover},
title = {{PATH at 20—History and Major Milestones}},
journal = {IEEE Transactions on Intelligent Transportation Systems},
volume = {8},
number = {4},
year = {2007}
}

@inproceedings{SouzaeSilva2025,
author = {Lucas Souza e Silva and Ali Akgunduz and Luis Rodrigues},
title = {{A Unified Approach for Optimal Cruise Airspeed with Variable Cost Index}},
booktitle = {International Conference on Control, Decision and Information Technologies},
month = {10},
year = {2025},
url = {}
}

@article{Hartl1995,
author = {R. F. Hartl and Suresh Sethi and Raymond Vickson},
title = {{A Survey of the Maximum Principles for Optimal Control Problems with State Constraints}},
journal = {SIAM Review},
volume = {37},
number = {2},
pages = {181--218},
year = {1995}
}

@inproceedings{Pawlak1996,
author = {William Pawlak and Christopher Brinton and Kimberly Crouch and Kenneth Lancaster},
title = {{A framework for the evaluation of air traffic control complexity}},
booktitle = {{AIAA Guidance, Navigation, and Control Conference}},
address = {San Diego, USA},
month = {07},
year = {1996},
url = {https://doi.org/10.2514/6.1996-3856}
}

@misc{Laudeman1998,
  title={{Dynamic Density: An Air Traffic Management Metric}},
  author={I. Laudeman and S. Shelden and R. Brannstrom and C. Brasil},
  url={https://ntrs.nasa.gov/citations/19980210764}
}

@inproceedings{Chatterji2001,
author = {Gano B. Chatterji and Banavar Sridhar},
title = {{Measures for Air Traffic Controller Workload Prediction}},
booktitle = {{AIAA Aircraft Technology, Integration and Operations Forum}},
address = {Los Angeles, USA},
month = {10},
year = {2001},
url = {https://doi.org/10.2514/6.2001-5242}
}

@article{Turri2017,
author = {Valerio Turri and Bart Besselink and Karl H. Johansson},
title = {Cooperative Look-Ahead Control for Fuel-Efficient and Safe Heavy-Duty Vehicle Platooning},
journal = {IEE Transactions on Control Systems Technology},
volume = {25},
number = {1},
pages = {12--28},
doi = {10.1109/TCST.2016.2542044},
year = {2017}
}

@article{Hussein2021,
author = {Ahmed A. Hussein and Hesham A. Rakha},
title = {Vehicle Platooning Impact on Drag Coefficients and
Energy/Fuel Saving Implications},
journal = {IEE Transactions Vehicular Technology},
volume = {71},
number = {2},
pages = {1199 -- 1208},
doi = {10.1109/TVT.2021.3131305},
year = {2021}
}

@article{Giulietti2000,
author = {Fabrizio Giulietti and Lorenzo
Pollini and Mario Innocenti},
title = {Autonomous Formation Flights},
journal = {{IEEE Control Systems Magazine}},
volume = {20},
number = {6},
pages = {34--44},
doi = {10.1109/37.887447},
year = {2000}
}

@article{Brittain2019,
author = {Marc Brittain and Peng Wei},
title = {Autonomous Air Traffic Controller: A Deep Multi-Agent Reinforcement Learning Approach},
journal = {{arxiv}},
doi = {10.48550/arXiv.1905.01303},
year = {2019}
}

@inproceedings{Su2024,
author = {Yu-Hsiang Su and Parijat Bhowmick and Alexander Lanzon},
title = {{A Negative Imaginary Solution to an Aircraft Platooning Problem}},
booktitle = {2024 European Control Conference},
address = {Stockholm, Sweden},
month = {06},
year = {2024},
}

@article{Tomlin1998,
author = {C. Tomlin and G.J. Pappas and S. Sastry},
title = {Conflict resolution for air traffic management: a study in multiagent hybrid systems},
journal = {{IEEE Transactions on Automatic Control}},
volume = {43},
number = {4},
pages = {509--521},
year = {1998}
}

@inproceedings{Gorodetsky2008,
author = {V. Gorodetsky and O. Karsaev and V. Samoylov and V. Skormin},
title = {{Multi-AgentTechnology for Air Traffic Control and Incident Management in AirportAirspace}},
booktitle = {International Workshop on Agents in Traffic andTransportation},
pages = {119--125},
year = {2008},
}

@misc{Wolfe2009,
  title={{A Multiagent Simulation of Collaborative Air Traffic Flow Management}},
  author={Shawn R. Wolfe and Peter A. Jarvis and Francis Y. Enomoto and Maarten Sierhuis and Bart-Jan van Putten},
booktitle = {{NASA Ames Research Center}},
   year = {2009}
}

@article{Toratani2022,
author = {Daichi Toratani and Takayuki Yoshihara and Atsushi Senoguchi},
title = {Support algorithm for air traffic controllers’ arrival spacing: Improvement of
trajectory estimation using Gaussian Process Regression},
journal = {Control Engineering Practice},
volume = {128},
doi = {10.1016/j.conengprac.2022.105343},
year = {2022}
}

@article{Yang2018,
author = {Yi Yang and Ying Nan and Ming Tong},
title = {Cooperative Route Planning for Multiple Aircraft in a Semifree ATC System},
journal = {Mathematical Problems in Engineering},
doi = {10.1155/2018/3521905},
year = {2018}
}

@inproceedings{Chen2015,
author = {Mo Chen and Qie Hu and Casey Mackin and Jaime F. Fisac and Claire J. Tomlin},
title = {{Safe platooning of unmanned aerial vehicles via reachability}},
booktitle = {54th IEEE Conference on Decision and Control (CDC)},
address = {Osaka, Japan},
doi = {10.1109/CDC.2015.7402951}, 
month = {12},
year = {2015},
}

@inproceedings{Su2023,
author = {Yu-Hsiang Su and Parijat Bhowmick and Alexander Lanzon},
title = {{Cooperative Control of Multi-Agent Negative Imaginary Systems with Applications to UAVs, Including Hardware Implementation Results}},
booktitle = {2023 European Control Conference (ECC)},
address = {Bucharest, Romania},
doi = {10.23919/ECC57647.2023.10178371}, 
month = {06},
year = {2023},
}

@article{Mayle2023,
author = {Melody N. Mayle and Rajnikant Sharma},
title = {Platooning in UAM airspace
structures: applying trajectory
shaping guidance law and
exploiting cooperative
localization},
journal = {Frontiers in Aerospace Engineering},
doi = {10.3389/fpace.2023.1176812},
year = {2023}
}

@inproceedings{Buzogany1993,
author = {L. Buzogany and M. Pachter and J. D'azzo},
title = {{Automated control of aircraft in formation flight}},
booktitle = {Guidance, Navigation and Control Conference},
address = {Monterey, USA},
doi = {10.2514/6.1993-3852}, 
month = {08},
year = {1993},
}

@inproceedings{Khoshdel2024,
author = {Sahand Khoshdel and Fatemeh Afghah and Qi Luo},
title = {{SkyGrid: Energy-Flow Optimization at Harmonized Aerial Intersections}},
booktitle = {100th IEEE Vehicular Technology Conference},
address = {Washington, DC, USA},
doi = {10.1109/VTC2024-Fall63153.2024.10757812}, 
month = {10},
year = {2024},
}

@inproceedings{Ribichini2003,
author = {G. Ribichini and E. Frazzoli},
title = {{Efficient coordination of multiple aircraft systems}},
booktitle = {42nd IEEE International Conference on Decision and Control},
pages = {1035–-1040},
address = {Maui, HI, USA},
doi = {10.1109/CDC.2003.1272704}, 
month = {12},
year = {2003},
}

@article{Benvenuti2024,
author = {Luca Benvenuti and Alberto de Santis},
title = {A Design Methodology for Commercial Aircraft Formation Flight Plans With Minimal Fuel Consumption},
journal = {IEEE Transactions on Aerospace and Electronic Systems},
volume = {60},
number = {4},
pages = {4157--4169},
doi = {10.1109/TAES.2024.3375838},
year = {2024}
}

@article{Hartjes2018,
author = {Sander Hartjes and Marco E. G. van Hellenberg Hubar  and Hendrikus G. Visser},
title = {Multiple-phase trajectory optimization for formation flight in civil aviation},
journal = {CEAS Aeronautical Journal},
volume = {10},
pages = {453–-462},
doi = {10.1007/s13272-018-0329-9},
year = {2018}
}

@article{Tang2022,
author = {Xinmin Tang and Xiaona Lu and Pengcheng Zheng},
title = {Aircraft Autonomous Separation Assurance Based on Cooperative Game Theory},
journal = {MDPI Aerospace},
doi = {10.3390/aerospace9080421},
year = {2022}
}

@article{Curtis2025,
author = {Olivia Curtis and Yibing Xie and Man Liang and Cees Bil},
title = {Energy-Efficient Path Planning for Commercial Aircraft Formation Flights},
journal = {MDPI Engineering Proceedings},
doi = {10.3390/engproc2024080015},
year = {2025}
}

@article{Hartjes2019,
author = {Sander Hartjes and Hendrikus G. Visser and Marco E. G. van Hellenberg Hubar},
title = {Trajectory Optimization of Extended Formation Flights for Commercial Aviation},
journal = {MDPI Aerospace},
doi = {10.3390/aerospace6090100},
year = {2019}
}

@article{Cerezo2021,
author = {María Cerezo-Magaña and Alberto Olivares and Ernesto Staffetti},
title = {Formation Mission Design for Commercial Aircraft Using Switched Optimal Control Techniques},
journal = {IEEE Transactions on Aerospace and Electronic Systems},
volume = {57},
number = {4},
pages = {2540--2557},
doi = {10.1109/TAES.2021.3061790},
year = {2021}
}

@article{Cerezo2022,
author = {María Cerezo-Magaña and Alberto Olivares and Ernesto Staffetti},
title = {A Stochastic Switched Optimal Control Approach to Formation Mission Design for Commercial Aircraft},
journal = {IEEE Transactions on Aerospace and Electronic Systems},
volume = {58},
number = {5},
pages = {4342--4360},
doi = {10.1109/TAES.2022.3161890},
year = {2022}
}

@misc{FAA_Conops,
author = {FAA},
title = {Urban Air Mobility (UAM) Concepts of Operation v2.0},
url = {https://www.faa.gov/sites/faa.gov/files/Urban20Air20Mobility2028UAM2920Concept20of20Operations202.0_0.pdf},
year = {2023}
}

@article{Feng2019,
author = {Shuo Feng and Yi Zhang and Shengbo Eben Li and Zhong Cao and Henry X. Liu and Li Li},
title = {String stability for vehicular platoon control: Definitions and analysis methods},
journal = {Annual Reviews in Control},
volume = {47},
pages = {81--97},
doi = {10.1016/j.arcontrol.2019.03.001},
year = {2019}
}

@article{Stephens2023,
author = {Shawn S. Stephens and David W. Casbeer and Dzung Tran and Donald L. Kunz},
title = {String Stable Heterogeneous Aircraft in a Windy Environment},
journal = {Journal of Guidance, Control, and Dynamics},
volume = {46},
number = {6},
pages = {1052--1065},
doi = {10.2514/1.G006391},
year = {2023}
}

@article{Besselink2017,
author = {B. Besselink and K. H. Johansson},
title = {String Stability and a Delay-Based
Spacing Policy for Vehicle Platoons Subject to Disturbances},
journal = {IEEE Transactions on Automatic Control},
volume = {62},
number = {9},
pages = {4376--4391},
year = {2017}
}

@inproceedings{Allen2002,
author = {Michael Allen and Jack Ryan and Curtis Hanson and James Parle},
title = {{String Stability of a Linear Formation Flight Control System}},
booktitle = {AIAA Guidance, Navigation, and Control Conference and Exhibit},
address = {Monterey, California, USA},
month = {08},
year = {2002},
doi = {10.2514/6.2002-4756}
}

@inproceedings{Swieringa2015,
author = {Kurt A. Swieringa},
title = {{The String Stability of a Trajectory-Based Interval Management Algorithm in the Midterm Airspace}},
booktitle = {15th AIAA Aviation Technology, Integration, and Operations Conference},
address = {Dallas, Texas, USA},
month = {06},
year = {2015},
doi = {10.2514/6.2015-2278}
}

@inproceedings{Weitz2018,
author = {Lesley A. Weitz and Kurt A. Swieringa},
title = {{Comparing Interval Management Control Laws for Steady-State Errors and String Stability}},
booktitle = {AIAA Guidance, Navigation, and Control Conference},
address = {Kissimmee, Florida, USA},
month = {01},
year = {2018},
doi = {10.2514/6.2018-1340}
}

@misc{E430,
author = {{GreenWing International}},
title = {E430 electric aircraft},
url = {https://greenwing.aero/?page id=2345},
year = {2022}
}

@misc{Velis,
  title={{Electric Pioneer - Velis Electro}},
  author={{Pipistrel}},
  url={www.pipistrel-aircraft.com/products/velis-electro/}
}

@book{Anderson1999,
  title={Aircraft Performance and Design},
  author={John D. Anderson},
  year={1999},
  publisher={WCB/McGraw-Hill}
}

@inproceedings{LiRodrigues2023,
author = {Steven Li and Luis Rodrigues},
title = {{Optimal Cruise Airspeed for Hybrid-Electric and Electric Aircraft: Applications to Air Mobility}},
booktitle = {{Conference on Control Technology and Applications (CCTA)}},
address = {Bridgetown, Barbados},
month = {08},
year = {2023},
doi = {10.1109/CCTA54093.2023.10252879}
}

@article{Riehl2022,
author = {James R. Riehl and Esteban A. L. Hufstedler and Philippe Chatelain and Julien M. Hendrickx},
title = {String Stability of Energy-Saving Aircraft Formations},
journal = {Journal of Guidance, Control, and Dynamics},
volume = {45},
number = {5},
pages = {935--943},
doi = {10.2514/1.G006207},
year = {2022}
}

@book{Pontr,
  title={The Mathematical Theory of Optimal Processes},
  author={Pontryagin, L S and Boltyanskii, V G and Gamkrelidze, R V and Mishchenko, E F},
  year={1962},
  publisher={Interscience Publishers}
}

\end{document}